\documentclass{ecai} 

\usepackage{latexsym}
\usepackage{amssymb}
\usepackage{amsmath}
\usepackage{amsthm}
\usepackage{booktabs}
\usepackage{graphicx}
\usepackage{color}
\usepackage{xcolor}
\usepackage{tcolorbox}
\usepackage{tikz}
\usepackage{enumerate}

\newtheorem{theorem}{Theorem}[section]
\newtheorem{lemma}[theorem]{Lemma}
\newtheorem{corollary}[theorem]{Corollary}
\newtheorem{proposition}[theorem]{Proposition}

\newtheorem{definition}{Definition}

\newcommand{\BibTeX}{B\kern-.05em{\sc i\kern-.025em b}\kern-.08em\TeX}
\newcommand{\util}{\textsc{Util}}
\newcommand{\egal}{\textsc{Egal}}
\newcommand{\egaldec}{\textsc{Egal-Dec}}
\newcommand{\nash}{\textsc{Nash}}
\newcommand{\nashdec}{\textsc{Nash-Dec}}
\newcommand{\mechanism}{\mathcal{M}}

\usepackage{todonotes}

\begin{document}

\begin{frontmatter}

\title{Temporal Elections:\\ Welfare, Strategyproofness, and Proportionality}

\author[1]{\fnms{Edith}~\snm{Elkind}}
\author[2]{\fnms{Tzeh Yuan}~\snm{Neoh}}
\author[3]{\fnms{Nicholas}~\snm{Teh}} 

\address[1]{Northwestern University, USA}
\address[2]{Agency for Science, Technology and Research, Singapore}
\address[3]{University of Oxford, UK}

\begin{abstract}
We investigate a model of sequential decision-making where a single alternative is chosen at each round. We focus on two objectives---utilitarian welfare (\util{}) and egalitarian  welfare (\egal{})---and consider the computational complexity of maximizing these objectives, as well as their compatibility with strategyproofness and proportionality. We observe that maximizing \util{} is easy, but the corresponding decision problem for \egal{} is NP-complete even in restricted cases. We complement this hardness result for \egal{} with parameterized complexity analysis and an approximation algorithm. Additionally, we show that, while a mechanism that outputs an outcome that maximizes \util{} is strategyproof, all deterministic mechanisms for computing outcomes that maximize \egal{} fail a very weak variant of strategyproofness, called non-obvious manipulability (NOM). However, we show that when agents have non-empty approval sets at each timestep, choosing an \egal{}-maximizing outcome while breaking ties lexicographically satisfies NOM. Regarding proportionality, we prove that a proportional (PROP) outcome can be computed efficiently, but finding an outcome that maximizes \util{} while guaranteeing PROP is NP-hard. We also derive upper and lower bounds on the (strong) price of proportionality with respect to \util{} and \egal{}. Some of our results extend to $p$-mean welfare measures other than \egal{} and \util{}.
\end{abstract}
\end{frontmatter}


\section{Introduction}
Consider a group of friends planning their itinerary for a two-week post-graduation trip across Europe. They have selected their activities, but still need to decide on their choice of meals for each day. As popular restaurants typically require reservations, everyone is asked to declare their preferences upfront before the trip commences.

Suppose that 55\% of the group members prefer Asian cuisine, 25\% prefer European cuisine, 10\% prefer Oceanic cuisine, and the remaining 10\% prefer South American cuisine.
If they need to schedule 40 meals, it would be fair to select European restaurants 
for 10 of these meals, and plan 22 visits to Asian restaurants, with the remaining 8 meals split equally between Oceanic and South American establishments.
However, if the friends were to adopt plurality voting to decide on each meal, 
Asian cuisine would be chosen for every meal, and, as a result, 45\% of the group will be perpetually unhappy. A natural question is then: what would be an appropriate notion of \emph{satisfaction}, and can we (efficiently) select an outcome that offers high satisfaction?

As the group moves from city to city, the set of available restaurants changes. Even within the same town, one may have different preferences for lunch and dinner, opting for a heavier meal option at lunch and a lighter one at dinner. As both preferences and the set of alternatives may evolve over time, traditional multiwinner voting models \cite{elkind2017propertiesmwv,faliszewski2017mwv,lackner2022abc} do not fully capture this setting.

This problem fits within the \emph{temporal elections} framework,
a model where a sequence of decisions is made, and  outcomes are evaluated with respect to agents' temporal preferences.
It was first introduced as \emph{perpetual voting} by \citet{lackner2020perpetual}; see the survey of \citet{elkind2024temporal} for a discussion of subsequent work.
%
We consider the \emph{offline} variant of this model where preferences are provided upfront.
That is, at each timestep, each agent has a set of approved alternatives, and the goal is to select a single alternative per timestep.

While this model has been considered in prior work~\cite{bulteau2021jrperpetual,chandak23,elkind2024verifying}, 
earlier papers focus on fairness and proportionality notions, with only a few (if at all) looking into welfare objectives and strategic considerations.
Against this background, in this work we focus on the task of maximizing two classic welfare objectives: the utilitarian welfare (the sum of agents' utilities) and
the egalitarian welfare (the minimum utility of any agent),  both in isolation and in combination with strategyproofness and proportionality axioms. We refer to these objectives (as well as, overloading notation, 
to outcomes that maximize them)
as \util{} and \egal{}, respectively.

\paragraph{Our Contributions}
In this paper, we investigate the \util{} and \egal{} objectives from three perspectives: the computational complexity of welfare maximization, compatibility with strategyproofness, and  trade-offs with proportionality.

In Section~3, we observe that a \util{} outcome can be computed in polynomial time, whereas computing an \egal{} outcome is NP-hard even under strong restrictions on the input. To mitigate the hardness result for \egal{} maximization, we analyze the parameterized complexity of this problem with  respect to several natural parameters, and provide an ILP-based approximation algorithm.

In Section~4, we show that, while a mechanism that outputs a \util{} outcome is strategyproof, any deterministic mechanism for \egal{} fails non-obvious manipulability (NOM), which is a relaxation of strategyproofness. On the positive side, if each agent has a non-empty approval set at each timestep, the mechanism that selects among \egal{} outcomes
using leximin tie-breaking satisfies NOM;
however, even in this special case
the \egal{} objective admits no deterministic strategyproof mechanism in general. 

In Section~5, we focus on proportionality. We show that, while a simple greedy algorithm can return a proportional (PROP) outcome, it is NP-hard to determine if there exists a PROP outcome that maximizes \util{}, even when each agent has a non-empty approval set at each timestep.
We also provide upper and lower bounds on the (strong) price of proportionality with respect to both \util{} and \egal{}. 
To the best of our knowledge, our work is the first to investigate the price of proportionality in temporal elections. 
Towards the end of the paper, we discuss how to extend our results to the general class of $p$-mean welfare objectives.

\paragraph{Related Work}
\citet{lackner2020perpetual} and, subsequently,  \citet{lackner2023proportionalPV} put forward the framework of {\em perpetual voting} in order to capture fairness in repeated decision-making.
They focused on temporal extensions of traditional multiwinner voting rules, and analyzed them with respect to novel axioms developed for the temporal setting.
\citet{bulteau2021jrperpetual} built upon this framework and proposed temporal extensions of popular \emph{justified representation} axioms.
\citet{chandak23} continued this line of work by studying whether temporal variants of justified representation axioms can be satisfied by efficiently computable voting rules. Their analysis is complemented by that of
\citet{elkind2024verifying}, who considered the computational problems associated with verifying whether outcomes satisfy these axioms.

An adjacent line of work by \citet{bredereck2020successivecommittee,bredereck2022committeechange} and \citet{zech2024multiwinnerchange}
looks into sequential committee elections, where an entire committee (i.e., a fixed-size set of candidates) is elected at each timestep. It is assumed that voters' preferences may evolve, but it may be desirable to minimize changes in the composition of the committee, or, conversely, to replace as many committee members as possible; the goal is to select a good committee (according to a given voting rule) under constraints of this form.
\citet{deltl2023seqcommittee} consider a similar model, but with the restriction that agents can only approve at most one candidate per timestep.

Our work is also related to apportionment with approval preferences
\cite{brill2024partyapportionment,delemazure2022spelection}.
In this setting, the goal is to allocate the seats of a fixed-size committee to parties based on voters' (approval) preferences over the parties.
This is equivalent to a restricted setting of temporal voting where voters have preferences that are static, i.e., do not change over time.

Yet another related model is that of fair scheduling: just as in our setting, voters are assumed to have approval preferences over the candidates at each timestep, but the key difference is that each candidate can be selected at most once.
\citet{elkind2022temporalslot}  studied computational issues associated with maximizing various welfare objectives in this model, along with several fairness properties.

One could also view temporal elections as allocating public goods or decision-making on public issues, making the fair public decision-making model \cite{conitzer2017fairpublic,fain2018publicgoods,skowron2022proppublic} particularly relevant. 

There are other models of issue-by-issue decision-making that are similar in spirit to temporal voting.
In particular, \citet{alouf2022better} considered issue-by-issue voting with uncertainty in voters' preferences, where the goal is to recover the majority-supported outcome for each issue.

Our analysis of price of proportionality is inspired by prior work on similar phenomena (i.e., the loss of welfare caused by imposing fairness guarantees) in the context of single-round multiwinner voting \cite{brill2024priceable,elkind2022price,lackner2020util}.


\section{Preliminaries}\label{sec:prelim}
For each positive integer $k$, let $[k] := \{1,\dots,k\}$.
Let $N = [n]$ be a set of $n$ \emph{agents}, let $P = \{p_1,\dots,p_m\}$ be a set of $m$ \emph{projects} (or \emph{candidates}), and let $\ell$ denote the number of \emph{timesteps}.
For each $t \in [\ell]$, the set of projects approved by agent $i\in N$ at timestep $t$
is captured by her \emph{approval set} $S_{it} \subseteq P$. The approval sets of agent $i$
are collected in her \emph{approval vector}
$\mathbf{S}_i = (S_{i1}, \dots, S_{i\ell})$.
Thus, an {\em instance} of our problem is a tuple 
$(N, P, \ell, ({\mathbf S}_i)_{i\in N})$. 
For each $p\in P, t\in [\ell]$ we denote by $n_{pt}$ the number of agents in $N$ that approve project $p$ at timestep $t$.

An \emph{outcome} is a vector $\mathbf{o} = (o_1,\dots,o_\ell)$, where $o_t\in P$ for each $t \in [\ell]$.
The utility of an agent $i \in N$ for an outcome 
$\mathbf{o}$ is given by $u_i(\mathbf{o}) = |\{t \in [\ell]: o_t \in S_{it}\}|$. Let $\Pi({\mathcal I})$ denote the space of all possible outcomes for an instance $\mathcal I$.
A {\em mechanism} maps an instance ${\mathcal I} = (N, P, \ell, ({\mathbf S}_i)_{i\in N})$ 
to an outcome in $\Pi({\mathcal I})$. 

We do not require each agent to approve at least one project at each timestep;
however, we do require that each agent approves at least one project at some timestep, i.e., for each $i\in N$ there exists a 
$t\in [\ell]$ with $S_{it} \neq \varnothing$; indeed, if this condition is failed for some $i\in N$, we can simply delete $i$, as it can never be satisfied. 
If $S_{it} \neq \varnothing$ for all $i\in N$ and $t \in [\ell]$, we say that the agents have {\em complete preferences (CP)}. 
We believe that the CP setting captures many real-life applications of our model:
for instance, in our motivating example, having no particular opinion on any cuisine would be more reasonably interpreted as approving all options rather than having an empty approval set.
We also consider the more restricted setting where $|S_{it}|=1$ for all $i\in N$ and $t \in [\ell]$; in this case, we say that the agents have {\em unique preferences (UP)}. 

We assume that the reader is familiar with basic notions of classic complexity theory~\cite{papadimitriou_computational_2007} 
and parameterized complexity \cite{flum_parameterized_2006,niedermeier_invitation_2006}.


\section{Welfare Maximization}\label{sec:welfare}
We first focus on welfare maximization; subsequently, we will explore combining welfare objectives with other desiderata.
The two objectives we consider are defined as follows.

\begin{definition}[Social Welfare] \label{defn-util} 
    Given an outcome $\mathbf{o}$, its
    \emph{utilitarian social welfare} 
    is defined as $\util({\mathbf o})=\sum_{i \in N} u_i(\mathbf{o})$ and 
    its
    \emph{egalitarian social welfare} 
    is defined as $\egal({\mathbf o}) = \min_{i \in N} u_i(\mathbf{o})$.
    We refer to outcomes that maximize the utilitarian/egalitarian welfare
    as \util{}/\egal{} outcomes, respectively.
\end{definition}


A \util{} outcome can be found
in polynomial time: at each timestep, we select a project that has the highest number of approvals.
However, if 51\% of the population approves
project $p$ (and nothing else) at each timestep, while 49\% of the population approves project $q$ (and nothing else) at each timestep, the 
outcome $(p, \dots, p)$ is the unique \util{} outcome, but close to half of the  population will obtain utility $0$ from it.
Thus, it is important to consider desiderata other than \util{}, such as, e.g., egalitarian welfare.
However, while computing a \util{} outcome  is computationally feasible, this is not the case for \egal{} outcomes.

The decision problem associated with computing \egal{} outcomes, which we denote by \egaldec{}, 
is defined as follows.
\begin{tcolorbox}
\textsc{Maximizing Egalitarian Welfare (Egal-Dec)}:\\
\textbf{Input}: An instance ${\mathcal I} = (N,P,\ell, (\mathbf{S}_i)_{i\in N})$ 
and a parameter $\lambda \in \mathbb{Z}^+$.\\
\textbf{Question}: Is there an outcome $\mathbf{o}$ with  
$\util{}(\mathbf{o})\geq \lambda$?
\end{tcolorbox}

The following result shows that \egaldec{} is NP-complete, even in the UP setting and even when the goal is to guarantee each agent a utility of $1$.
\begin{theorem} \label{thm:npcomplete}
    \egaldec{} is \emph{NP}-complete, even in  
    the UP setting with $\lambda = 1$.
\end{theorem}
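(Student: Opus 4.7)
The plan is to first observe that Egal-Dec is in NP: given a candidate outcome $\mathbf{o}$, we can compute $u_i(\mathbf{o})$ in polynomial time for every agent $i$ and verify $\min_i u_i(\mathbf{o}) \geq \lambda$. For NP-hardness, I would reduce from 3-SAT. Given a formula $\phi$ with variables $x_1, \ldots, x_n$ and clauses $C_1, \ldots, C_m$, I construct an instance with $\ell = n$ timesteps and projects $\{T_i, F_i : i \in [n]\} \cup \{q_{j,c,i} : j \in [m],\, c \in [n+1],\, i \in [n] \setminus V(C_j)\}$, where $V(C_j)$ denotes the set of variable indices appearing in $C_j$. For each clause $C_j$, I create $n+1$ agent copies $a_j^1, \ldots, a_j^{n+1}$; copy $a_j^c$ approves, at each timestep $i \in V(C_j)$, the project $T_i$ if $x_i$ appears positively in $C_j$ (or $F_i$ if negatively), and at each timestep $i \notin V(C_j)$, approves the fresh dummy $q_{j,c,i}$. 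Setting $\lambda = 1$, this instance is polynomial in $|\phi|$ and satisfies UP.

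For correctness, a satisfying assignment $\alpha$ to $\phi$ yields the outcome $o_i = T_i$ if $\alpha(x_i) = T$ and $o_i = F_i$ otherwise; since $\alpha$ satisfies every $C_j$, some literal of $C_j$ is true at a variable $i \in V(C_j)$, and the corresponding pick matches the approval of every copy $a_j^c$, so all agents obtain utility at least $1$. Conversely, suppose an outcome $\mathbf{o}$ satisfies every agent. Since each dummy $q_{j,c,i}$ is approved solely by $a_j^c$, any single pick can satisfy at most one copy of any clause via the dummy route. Satisfying all $n+1$ copies of a clause $C_j$ purely via dummies would therefore require $n+1$ distinct timesteps, contradicting $\ell = n$. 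Hence, for every clause $C_j$, at least one of its copies must be satisfied via a literal pick at some $i \in V(C_j)$, so the partial assignment induced by the literal picks satisfies $C_j$. Extending this partial assignment arbitrarily keeps each clause's satisfying literal true, yielding a full satisfying assignment for $\phi$.

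The main obstacle is accommodating UP at timesteps $i \notin V(C_j)$: naively letting $a_j$ approve a single dummy would allow satisfying the clause agent by a single dummy pick, trivializing the reduction. The fix is to introduce $n+1$ distinct copies of each clause agent, each with its own dedicated per-timestep dummies, so that rescuing any single clause via dummies alone would require more picks than the $n$ timesteps available, forcing the outcome to encode a genuine satisfying assignment for $\phi$.
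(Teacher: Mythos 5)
Your proof is correct, but it takes a genuinely different route from the paper. The paper reduces from \textsc{Vertex Cover}: edge agents approve $\{p_t\}$ at timesteps corresponding to their endpoints and a private project $q_i$ elsewhere, while $s-k$ dummy agents each approve only their own private project, forcing $s-k$ timesteps to be ``wasted'' and leaving only $k$ timesteps to cover the edges. You instead reduce from \textsc{3-SAT} with one timestep per variable, encoding truth values as a choice between $T_i$ and $F_i$, and you handle the UP constraint by giving each clause agent a private per-timestep dummy at timesteps where its variables do not occur. The key move --- replicating each clause agent $n+1$ times with disjoint dummies so that the dummy route cannot rescue all copies within $n$ timesteps --- is sound: each dummy pick satisfies exactly one copy, so a pigeonhole argument forces at least one copy of every clause to be satisfied by a literal pick, and the induced (partial) assignment satisfies $\phi$. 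Both reductions ultimately exploit the same mechanism (private single-approver projects that make timesteps a scarce resource), but the paper implements the scarcity with extra \emph{agents} consuming timesteps, whereas you implement it with extra \emph{copies} that overwhelm the dummy capacity; the paper's instance is more compact ($r+s-k$ agents versus your $m(n+1)$ agents and $O(mn^2)$ dummy projects), while yours connects more directly to the \textsc{3-SAT} reduction used for the $m=2$ case (Theorem~\ref{thm:m=2}) and shows how to repair that reduction's empty approval sets. One cosmetic point: you should note explicitly that tautological clauses (containing both $x_i$ and $\neg x_i$) are excluded, since otherwise a copy's approval set at timestep $i$ would contain two projects and violate UP; this is standard and does not affect correctness.
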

\begin{proof}
 It is immediate that \egaldec{} is in NP: given a candidate outcome, we can easily check
 if it provides utility of at least $\lambda$ to each agent.

 To prove hardness, we reduce from the classic {\sc Vertex Cover} problem. An instance of this problem
 is a pair $(G, k)$, where $G=(V, E)$ is an undirected graph 
 and $k$ is a positive integer; it is a yes-instance if there is a subset $V'\subseteq V$
 of size at most $k$ such that for every edge $\{v, v'\}\in E$ we have $v\in V'$ or $v'\in V'$, and a no-instance otherwise. 

 Given an instance $(G, k)$ of {\sc Vertex Cover} with $G=(V, E)$, $V=\{v_1, \dots, v_s\}$, $|E|=r$, we proceed as follows. 
 We let $N=[r+s-k]$, where the first $r$ agents in $N$ are edge agents (i.e., one agent for each edge of $G$), 
 and the remaining $s-k$ agents are dummy agents (where $k$ is the target size of the vertex cover, so $k\le s$), 
 and set $n=|N|=r+s-k$, $\ell = s$.

 Let $P^0=\{q_1, \dots, q_n\}$, $P^*=\{p_1, \dots, p_s\}$, and set $P=P^0\cup P^*$.
 For each $t\in [\ell]$, we set the approval set of the edge agent $i$ at timestep $t$ to $\{p_t\}$ if $i$ is an edge agent that corresponds to an edge incident to $v_t$; if $i$ is a dummy agent or an edge agent who corresponds to an edge not incident to $v_t$, we set $S_{it}=\{q_i\}$. By construction, the agents have unique preferences. Let $\lambda=1$.

 Suppose $G$ admits a vertex cover $V'$ of size $k$. We then construct an outcome 
 ${\mathbf o}=(o_1, \dots, o_\ell)$ as follows. For each $v_j\in V'$ we set
 $o_j=p_j$. The remaining $s-k$ timesteps are assigned arbitrarily to the $s-k$ dummy agents; if a dummy agent $i$ receives timestep $t$, we set $o_t=q_i$.
 By construction, the utility of every dummy agent is $1$. We claim that all edge
 agents have positive utility as well. Indeed, consider an agent that corresponds
 to an edge $\{v_j, v_t\}$. Since $V'$ is a vertex cover, we have $o_j=p_j$ or $o_t=p_t$ (or both), so the utility of
 this agent is at least $1$.

 Conversely, consider an outcome $\mathbf o$ with $u_i({\mathbf o})\ge 1$ for all $i\in N$. A dummy agent $i$ 
 can only be satisfied if we set $o_t=q_i$ for some
 $t\in [\ell]$, and this provides zero utility to agents other than~$i$. Thus, $\mathbf o$ has
 to allocate at least $s-k$ timesteps to satisfying the dummy agents. Hence, $\mathbf o$ uses at most
 $k$ timesteps to provide utility of $1$ to each edge agent. We will use this fact to construct 
 a set of vertices $V'$ that forms a vertex cover of size at most $k$ for $G$. 
 Let $T$ be the set of timesteps in $[\ell]$
 that are not allocated to dummy agents; we have $|T|\le k$. 
 We initialize $V'=\varnothing$. Then, 
 for each timestep $t\in T$, if $o_t=p_j$ for some $p_j\in P^*$ we place $v_j$ in $V'$, and if $o_t=q_j$ for some $j\in [r]$, we place one of the endpoints of the edge that corresponds to agent $j$ in $V'$.
 The resulting set $V'$ has size at most $|T|\le k$;
 moreover, as it `covers' each edge agent, it has to form a vertex cover for $G$. 
\end{proof}

Theorem~\ref{thm:npcomplete} effectively rules out the possibility of maximizing the egalitarian welfare even in simple settings.
Therefore, it what follows, we consider the complexity of egalitarian welfare maximization from the parameterized complexity perspective (Section~\ref{sec:param}) and from the approximation algorithms perspective (Section~\ref{sec:approx}).


\subsection{Parameterized Complexity of Egalitarian Welfare}\label{sec:param}
The \egaldec{} problem has four natural parameters: the number of agents $n$, the number of projects $m$, the number of timesteps $\ell$, and the utility guarantee $\lambda$. We will now consider these parameters one by one.



We first show that \egaldec{} is in FPT with respect to $n$. Our proof is based on integer linear programming. Specifically, we show how to encode \egaldec{} as an integer linear program (ILP) whose number of variables depends on $n$ only; our claim then follows from Lenstra's classic result~\cite{lenstra1983integer}. 
To accomplish this, we classify the projects and timesteps into `types', so that the number of types is (doubly) exponential in $n$, but does not depend on $m$ or $\ell$.
\begin{theorem} \label{thm:fpt}
    \egaldec{} is \emph{FPT} with respect to $n$.
\end{theorem}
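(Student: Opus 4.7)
The plan is to follow the ILP approach hinted at in the paragraph preceding the theorem. The key observation is that, for any fixed outcome, the only information about the chosen project at timestep $t$ that affects the utility vector is the set of agents that approve it at timestep $t$. So I will first collapse projects and timesteps into a small number of equivalence classes, write an ILP whose variables count how many timesteps of each class are assigned to each class of project, and then invoke Lenstra's algorithm.

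First I will define, for each project $p\in P$ and timestep $t\in[\ell]$, its \emph{type at $t$} as $A(p,t)=\{i\in N: p\in S_{it}\}\subseteq N$; there are at most $2^n$ such types. Next, I will define the \emph{menu} of a timestep $t$ as $M_t=\{A(p,t): p\in P\}\subseteq 2^N$, and call two timesteps equivalent if their menus coincide. Since a menu is a subset of $2^N$, the number of distinct menus is at most $2^{2^n}$, which depends only on $n$. For each menu $M$, let $c_M$ denote the number of timesteps whose menu is $M$; these numbers can be computed in polynomial time and satisfy $\sum_M c_M=\ell$.

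I will then set up an ILP with one nonnegative integer variable $x_{M,A}$ for every menu $M$ and every type $A\in M$, interpreted as the number of timesteps with menu $M$ at which the chosen project has type $A$. The constraints are
\begin{align*}
\sum_{A\in M} x_{M,A} &= c_M \quad \text{for every menu } M,\\
\sum_{M}\sum_{A\in M:\, i\in A} x_{M,A} &\geq \lambda \quad \text{for every agent } i\in N,\\
x_{M,A} &\geq 0 \quad \text{and integer.}
\end{align*}
Any feasible solution can be turned into an outcome by processing the timesteps of each menu $M$ and, for each type $A\in M$, assigning $x_{M,A}$ of them to an arbitrary project realising type $A$ at that timestep; conversely, any outcome providing each agent with utility at least $\lambda$ induces such a feasible assignment. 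Hence the ILP is feasible if and only if $({\mathcal I},\lambda)$ is a yes-instance.

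The number of variables is bounded by $2^{2^n}\cdot 2^n$, a function of $n$ alone, and the encoding of the coefficients and of $c_M,\lambda$ is polynomial in the input size. Lenstra's algorithm \cite{lenstra1983integer} decides feasibility of an ILP in time polynomial in the input size and computable-as-a-function-of the number of variables, so the entire procedure runs in FPT time with respect to $n$. The only mildly delicate point, which I would double-check when writing the proof in full, is that menus and their counts $c_M$ can indeed be enumerated and indexed in time polynomial in $m\cdot\ell$ plus $2^{2^n}$, but this is a routine bookkeeping step rather than a genuine obstacle.
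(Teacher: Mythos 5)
Your proposal is correct and follows essentially the same route as the paper's proof: classify projects by their approver sets, classify timesteps by the collection of project types they offer, and solve an ILP with at most $2^{n+2^n}$ counting variables via Lenstra's algorithm (your per-timestep type $A(p,t)$ neatly replaces the paper's preprocessing step of duplicating each project $\ell$ times, and your equality constraint in place of the paper's inequality is an immaterial difference). No gaps.
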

\begin{proof}
    As a preprocessing step, we create $\ell$ copies of each project. That is, we replace a project
    $p$ with projects $p^1, \dots, p^\ell$ and modify the approval vectors: for each $i\in N$, 
    $t\in [\ell]$ and $p\in S_{it}$, we place $p^t$ in $S_{it}$ and remove $p$. This does not change the nature
    of our problem, since in our setting there is no dependence between timesteps, and a project's label can be re-used arbitrarily between timesteps.
    For the modified instance, 
    it holds that for each project $p$ there is at most one timestep $t\in [\ell]$
    such that $p\in\cup_{i\in N}S_{it}$. Then, we define the {\em type} of a project as the set of voters who approve it: the type of $p$ is $\pi(p) = \{i\in N: p\in \cup_{t\in [\ell]}S_{it}\}$. Because of the preprocessing step, for each project $p$
    there is a timestep $t\in [\ell]$ such that $p\in S_{it}$ for all $i\in \pi(p)$, and $p\not\in S_{it'}$ for all $i\in N$, $t'\in [\ell]\setminus\{t\}$. Let ${\mathcal P}\subseteq 2^N$ be the set of all project types;
    by construction, we have $|{\mathcal P}|\le 2^n$. 
    
    In the same way, we can classify timesteps by the types of projects present in them: the type of a timestep $t$ is defined as $\tau(t) = \{\pi(p): 
    p\in \cup_{i\in N}S_{it} \}$. Let $\mathcal{T} \subseteq 2^{\mathcal P}$ be the set of all timestep types; by construction, we have
    $|\mathcal{T}|\le 2^{|\mathcal{P}|}\le 2^{2^n}$.
    
    We are now ready to construct our ILP.
    For each $\tau \in \mathcal{T}$, let $z_\tau$ be the number of
    timesteps of type $\tau$; these quantities can be computed from the input.
    For each $\tau \in \mathcal{T}$ and $\pi\in \tau$, we introduce an integer variable $x_{\tau, \pi}$ representing the number of timesteps of type $\tau$ in which a project of type $\pi$ was chosen.

    The ILP is defined as follows: 
	\begin{equation*}
		\text{maximize } \lambda \quad\text{ subject to: }
	\end{equation*}
 \begin{itemize}
 \item[(1)] $\sum_{\pi \in \tau} x_{\tau, \pi} \leq z_\tau \text{ for each } \tau \in \mathcal{T}$;
 \item[(2)] $\sum_{\tau \in \mathcal{T}}\sum_{\pi \in {\mathcal P}: i\in \pi} x_{\tau, \pi} \geq \lambda \text{  for each } i \in N$;
 \item[(3)] $x_{\tau, \pi} \geq 0$ for each $\tau \in \mathcal{T}$ and $\pi\in \mathcal P$.
 \end{itemize}       
    \noindent
    The first constraint requires that we select at most one project per timestep, whereas the second constraint ensures that each agent has utility of at least $\lambda$ from the outcome.
    
    There are at most $\mathcal{O}(2^{n + 2^n})$ variables in the ILP, so the classic result of \citet{lenstra1983integer} implies that our problem is FPT with respect to $n$.
\end{proof}

For the number of agents $m$, we can show that \egaldec{} is NP-hard even for $m=2$; this hardness result holds even if $\lambda=1$, but our reduction produces instances where $S_{it}=\varnothing$ for some $i\in N$, $t\in [\ell]$, i.e., it does {\em not} show that \egaldec{} is NP-hard in the CP setting
(and hence Theorem~\ref{thm:m=2} below does not imply Theorem~\ref{thm:npcomplete}).
Our argument is similar in spirit to the proof of Theorem~2 of \citet{deltl2023seqcommittee}.

\begin{theorem}\label{thm:m=2}
    \egaldec{} is {\em NP}-complete even if $m=2$, $\lambda=1$.
\end{theorem}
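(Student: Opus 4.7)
The plan is to reduce from 3-SAT, exploiting the fact that with only two projects, the outcome at each timestep is essentially a Boolean choice. Given a 3-SAT formula $\phi$ with variables $y_1,\dots,y_\ell$ and clauses $C_1,\dots,C_n$, I would construct an \egaldec{} instance with $P=\{p_1,p_2\}$, $\ell$ timesteps (one per variable), $n$ agents (one per clause), and $\lambda=1$. The intended correspondence is that choosing $o_t=p_1$ encodes $y_t=\text{True}$ and choosing $o_t=p_2$ encodes $y_t=\text{False}$.

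To realize this semantically, I would define $S_{it}$ for agent $i$ (corresponding to $C_i$) at timestep $t$ (corresponding to $y_t$) by including $p_1$ iff the literal $y_t$ appears in $C_i$, including $p_2$ iff the literal $\bar y_t$ appears in $C_i$, and leaving $S_{it}=\varnothing$ otherwise. With this encoding, agent $i$ receives utility at least $1$ from an outcome $\mathbf{o}$ iff some $t$ has $o_t\in S_{it}$, which by construction occurs iff at least one literal of $C_i$ is satisfied by the assignment induced by $\mathbf{o}$. Thus an outcome with $\egal(\mathbf{o})\ge 1$ exists iff $\phi$ is satisfiable, proving NP-hardness; NP membership follows exactly as in Theorem~\ref{thm:npcomplete}.

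I do not foresee a substantive obstacle: the construction is essentially the canonical embedding of SAT into a two-alternative choice problem, and the bookkeeping is light. The one item to verify carefully is that the standing assumption of the model, that every agent approves some project at some timestep, is preserved; this holds because every 3-SAT clause contains at least one literal, so the corresponding agent has a non-empty $S_{it}$ at the associated timestep. It is worth flagging explicitly that the reduction genuinely relies on empty approval sets for variables absent from a clause, which is why, as the excerpt notes, this argument does \emph{not} extend to the CP setting and hence does not subsume Theorem~\ref{thm:npcomplete}.
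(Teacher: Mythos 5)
Your reduction is correct and is essentially identical to the paper's own proof, which also reduces from \textsc{3-SAT} with one agent per clause, one timestep per variable, two projects encoding the True/False choice, and empty approval sets for variables absent from a clause. Your closing remark about why the argument does not extend to the CP setting matches the paper's discussion as well.
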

\begin{proof}
    We reduce from the classical \textsc{3-SAT} problem. 
    Consider a formula $F$ with $\alpha$ variables $x_1,\dots,x_\alpha$, and $\kappa$ clauses $M_1,\dots,M_\kappa$.
    Each clause is a disjunction of at most three variables or their negations. 
    $F$ is satisfiable if there exists some assignment of Boolean values to variables such that the conjunction of all $\kappa$ clauses evaluates to True, and not satisfiable otherwise.
    
    To reduce \textsc{3-SAT} to our problem, we introduce $\kappa$ agents, 
    a set of projects $P=\{\tau, \varphi\}$, 
    and $\alpha$ timesteps; we set $\lambda=1$. 
    For each $i \in [\kappa]$ and $t \in [\alpha]$, we define $S_{it}$ as follows.
\begin{equation*}
S_{it} =
    \begin{cases}
        \{ \tau \}    & \text{if  } \,x_t \text{ is in }  M_i\\
        \{ \varphi \} & \text{if  } \,\neg x_t \text{ is in }  M_i\\
        \varnothing   & \text{otherwise }
    \end{cases}
\end{equation*}
We claim that $F$ is satisfiable if and only if there exists an outcome $\mathbf{o}$ that satisfies $u_i(\mathbf{o}) \geq 1$ for all $i \in [\kappa]$. 

For the `if' direction, an outcome $\mathbf{o} =(o_1, \dots, o_\alpha)$ can be interpreted a Boolean assignment 
to the variables in $F$: for each $t\in [\alpha]$ we set $x_t$ to True if $o_t=\tau$ and to False if $o_t=\varphi$.
Then for each $i \in [\kappa]$ her utility $u_i(\mathbf{o})$ is the number of literals in $M_i$ that are set to 
True by this assignment. Hence, if for all $i \in [\kappa]$ it holds that $u_i(\mathbf{o}) \geq 1$, 
then each clause will evaluate to True, and hence $F$ is satisfiable.

For the `only if' direction, given a satisfying Boolean assignment for $F$, 
we construct an outcome $ \mathbf{o} = (o_1, \dots, o_\alpha)$ by setting $o_t=\tau$ 
if this assignment sets $x_t$ to True and $o_t=\varphi$ otherwise.
We then have $u_i(\mathbf{o}) \geq 1$ for all $i \in [\kappa]$. 
\end{proof}

Next, we show that \egaldec{} is in XP with respect to $\ell$. Our argument is based on exhaustive search.

\begin{theorem} \label{thm:xp}
    \egaldec{} is in \emph{XP} with respect to $\ell$.
\end{theorem}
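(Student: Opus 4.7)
The plan is to establish membership in XP by straightforward exhaustive search: for every fixed value of $\ell$, enumerate all possible outcomes and check whether any of them attains egalitarian welfare at least $\lambda$.

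First, I would observe that an outcome is an element of $P^\ell$, and there are therefore exactly $m^\ell$ candidate outcomes. Second, for any candidate outcome $\mathbf{o} = (o_1, \dots, o_\ell)$, the utility $u_i(\mathbf{o}) = |\{t \in [\ell] : o_t \in S_{it}\}|$ of a single agent can be evaluated in time $O(\ell)$ (one membership test per timestep), so $\egal(\mathbf{o}) = \min_{i \in N} u_i(\mathbf{o})$ can be computed in time $O(n\ell)$. Third, the algorithm simply iterates over all $m^\ell$ outcomes, computes $\egal(\mathbf{o})$ for each, and accepts if and only if some outcome satisfies $\egal(\mathbf{o}) \ge \lambda$.

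The total running time is $O(m^\ell \cdot n\ell)$. Since $m$, $n$, and $\ell$ are all bounded by the input size, for every fixed $\ell$ this expression is polynomial in the input size, which is precisely the definition of XP with respect to $\ell$. A small (and optional) refinement is to note that at each timestep $t$ two projects with the same approval set $\{i \in N : p \in S_{it}\}$ are interchangeable, so it suffices to enumerate at most $\min(m, 2^n)$ distinct ``effective'' choices per timestep; this tightens the base of the exponent but is not needed for the XP classification.

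There is no genuine obstacle here: the statement is essentially immediate once one unpacks the definitions, and the only thing to verify carefully is that $m^\ell \cdot n\ell$ is indeed polynomial in the input size whenever $\ell$ is regarded as a constant, which is the meaning of XP membership in the parameter $\ell$.
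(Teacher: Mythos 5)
Your proposal is correct and follows exactly the same route as the paper's proof: enumerate all $m^\ell$ outcomes, evaluate the egalitarian welfare of each, and accept if any reaches $\lambda$. The explicit $O(m^\ell \cdot n\ell)$ running-time bound and the remark about collapsing interchangeable projects are harmless additions beyond what the paper states.
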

\begin{proof}
    Observe that there are $m^\ell$ possible outcomes. 
    We can thus iterate through all outcomes; we output `yes' if we find an outcome that provides utility $\lambda$ to all agents, and 'no' otherwise.
\end{proof}
Both Theorem~\ref{thm:fpt} and Theorem~\ref{thm:xp} provide algorithms for \egaldec{}, which is a decision problem. However, both algorithms can be modified to output an \egal{} outcome.

We complement our XP result for $\ell$ by showing that \egaldec{} is W[2]-hard with respect to the number of timesteps. This indicates that an FPT (in $\ell$) algorithm does not exist unless FPT = W[2], and hence the XP result of Theorem~\ref{thm:xp} is tight.

\begin{theorem} \label{thm:w2hard}
    \egaldec{} is \emph{W[2]}-hard with respect to $\ell$, even in the CP setting with $\lambda=1$. 
\end{theorem}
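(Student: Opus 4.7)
The plan is to reduce from \textsc{Set Cover}, which is W[2]-hard when parameterized by the size $k$ of the cover. Given an instance $(U,\mathcal{F},k)$ of \textsc{Set Cover} with $U=\{u_1,\dots,u_\alpha\}$ and $\mathcal{F}=\{F_1,\dots,F_\beta\}$, I will preprocess (without loss of generality) so that every element appears in at least one set of $\mathcal{F}$, and then construct an \egaldec{} instance as follows: introduce one agent $a_i$ for each element $u_i\in U$, one project $p_j$ for each set $F_j\in\mathcal{F}$, set the number of timesteps to $\ell=k$, and set $\lambda=1$.

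For every agent $a_i$ and every timestep $t\in[\ell]$, I will let $S_{a_i,t}=\{p_j : u_i\in F_j\}$. By the preprocessing step this set is non-empty, so the CP property holds. Crucially, the approval set of each agent is the same at every timestep, which reflects the fact that \textsc{Set Cover} solutions are \emph{sets} of sets rather than sequences; since only one project is selected per timestep, the outcome corresponds to a multiset of at most $k$ distinct sets from $\mathcal{F}$.

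The correctness argument will run as follows. If $\mathcal{F}'\subseteq\mathcal{F}$ is a cover of $U$ with $|\mathcal{F}'|\le k$, pick an arbitrary enumeration $F_{j_1},\dots,F_{j_{|\mathcal{F}'|}}$ and set $o_t=p_{j_t}$ for $t\le |\mathcal{F}'|$, filling the remaining timesteps arbitrarily; every agent $a_i$ has positive utility because $u_i$ is covered by some $F_{j_t}\in\mathcal{F}'$, so $o_t\in S_{a_i,t}$. Conversely, given an outcome $\mathbf{o}$ with $u_{a_i}(\mathbf{o})\ge1$ for all $i$, let $\mathcal{F}'=\{F_j : p_j\text{ appears in }\mathbf{o}\}$; then $|\mathcal{F}'|\le \ell=k$, and for each $u_i\in U$ there is some $t$ with $o_t\in S_{a_i,t}$, which means the corresponding set contains $u_i$, so $\mathcal{F}'$ covers $U$. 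Since $\ell=k$, an FPT algorithm for \egaldec{} parameterized by $\ell$ would yield an FPT algorithm for \textsc{Set Cover} parameterized by $k$, establishing W[2]-hardness.

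The main obstacle I anticipate is really just bookkeeping: verifying that the CP guarantee $S_{it}\neq\varnothing$ is preserved (handled by the standard preprocessing) and that the parameter in the source problem maps exactly to $\ell$ rather than, say, $\ell+O(1)$ (handled by setting $\ell=k$ on the nose, with no padding timesteps needed because unused timesteps can be filled by any project without affecting feasibility when $\lambda=1$). The reduction runs in polynomial time because the construction is linear in $|U|+|\mathcal{F}|$, which completes the argument.
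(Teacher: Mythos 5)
Your reduction is correct and is essentially the same as the paper's: the paper reduces from \textsc{Dominating Set} parameterized by solution size (which is just \textsc{Set Cover} restricted to closed neighborhoods), using the identical construction of one agent per element/vertex, one project per set/vertex, static approval sets repeated over $\ell=k$ timesteps, and $\lambda=1$. The correctness argument, the CP guarantee, and the handling of unused timesteps all match the paper's proof.
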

\begin{proof}
    We reduce from the \textsc{Dominating Set (DS)} problem.
An instance of \textsc{DS} consists of a graph $G = (V,E)$ and an integer $\kappa$; it is a yes-instance if there exists a subset $D \subseteq V$ such that $|D| \leq \kappa$ and every vertex of $G$ is either in $D$ or has a neighbor in $D$, and a no-instance otherwise.
\textsc{DS} is known to be W[2]-complete with respect to the parameter $\kappa$ \cite{niedermeier_invitation_2006}.

Given an instance $(G, \kappa)$ of \textsc{Dominating Set} with $G = (V,E)$, $V=\{v_1, \dots, v_n\}$,  
set $N=[n]$, $P = \{p_1,\dots,p_n\}$, $\ell = \kappa$.
Then for each $i\in N$
and $t\in [\ell]$
let $S_{it}=\{p_j: i=j\text{ or }\{v_i, v_j\}\in E\}$.
We claim that $G$ admits a dominating set $D$ with $|D| \leq \kappa$ if and only if there exists an outcome $\mathbf{o}$ such that $u_i(\mathbf{o}) \geq 1$ for all agents $i \in N$. 

For the `if' direction, consider an outcome $\mathbf{o} = (p_{j_1},\dots,p_{j_\kappa})$
that provides positive 
utility to all agents, and 
set $D=\{v_{j_1}, \dots, v_{j_\kappa}\}$. Then $D$ is a dominating set of size at most $\kappa$. 
Indeed, consider a vertex $v_i\in V$. Since voter $i$ approves $p_{j_t}$ for some $j_t\in T$, we have $v_{j_t}\in D$ 
and $i=j_t$ or $\{v_i, v_{j_t}\}\in E$.
Note that if there are projects chosen more than once, we simply have $|D|<\kappa$. 

For the `only if' direction, observe that a dominating set $D=\{v_{j_1}, \dots, v_{j_s}\}$ with $s \leq \kappa$ can be mapped to an outcome $\mathbf{o} = (p_{j_1}, \dots, p_{j_s}, p_1, \dots, p_1)$ (with $p_1$ selected in the last $\kappa-s$ timesteps). As any vertex of $G$ is either in $D$, or has a neighbor in $D$,  we have $u_i(\mathbf{o}) \geq 1$ for each agent $i \in N$.
\end{proof}

Finally, we consider parameter $\lambda$. By Theorems~\ref{thm:npcomplete} and~\ref{thm:m=2}, \egaldec{} is hard 
even for $\lambda=1$; this result holds even in
the UP setting (Theorem~\ref{thm:npcomplete}) or if $m=2$ (Theorem~\ref{thm:m=2}).
However, if both all approval sets are non-empty 
(which is a weaker condition than UP) and $m=2$, we obtain a positive result.

\begin{theorem} \label{thm:XP-lambda}
    \egaldec{} is in \emph{XP} with respect to $\lambda$ in the \emph{CP} setting with $m = 2$.
\end{theorem}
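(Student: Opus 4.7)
The plan is to show that for fixed $\lambda$, the set $B$ of ``bad'' outcomes --- those giving some agent utility below $\lambda$ --- has polynomial size, so that a feasible outcome can either be extracted from its complement by a greedy pigeonhole construction, or found by brute-force enumeration when $\ell$ is necessarily small.

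Write $P=\{p_1,p_2\}$ and, for each agent $i$, let $E_i \subseteq [\ell]$ be the set of timesteps where $i$ approves exactly one project. Under CP, every $t\notin E_i$ automatically contributes $1$ to $u_i(\mathbf{o})$, so $u_i(\mathbf{o})=\ell - d_i(\mathbf{o})$, where $d_i(\mathbf{o})$ counts positions $t\in E_i$ at which $o_t$ disagrees with $i$'s unique approved project. Thus $\mathbf{o}$ is bad for $i$ iff $d_i(\mathbf{o})\geq \ell-\lambda+1$, which in particular forces $\ell - |E_i| \leq \lambda - 1$. For such an $i$, the number of outcomes bad for $i$ is $|B_i|=2^{\ell-|E_i|}\sum_{k=\ell-\lambda+1}^{|E_i|}\binom{|E_i|}{k}$; after substituting $j=|E_i|-k$, the sum has at most $\lambda$ terms with $j\leq\lambda-1$, each at most $\binom{\ell}{\lambda-1}$, and the prefactor is at most $2^{\lambda-1}$. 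Hence $|B_i|=O(\ell^{\lambda-1})$ and $|B|\leq \sum_i |B_i|=O(n\ell^{\lambda-1})$, polynomial in the input size for fixed $\lambda$; moreover, $B$ can be listed explicitly within this bound by enumerating disagreement subsets of $E_i$ and free assignments on $[\ell]\setminus E_i$.

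The algorithm then splits on the size of $B$. If $|B|<2^\ell$, I would build $\mathbf{o}$ greedily: at timestep $t$, pick $o_t\in\{p_1,p_2\}$ so that the number of bad outcomes whose prefix agrees with $(o_1,\dots,o_t)$ is at most half the previous count, which is always possible by pigeonhole; after $\ell$ such steps fewer than one bad outcome remains consistent with $\mathbf{o}$, so $\mathbf{o}\notin B$ is feasible. Otherwise $|B|\geq 2^\ell$ forces $\ell = O_\lambda(\log n)$ and hence $2^\ell = n^{O_\lambda(1)}$, so exhaustively checking all outcomes suffices (and correctly reports ``no'' if every outcome lies in $B$). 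The delicate step is the counting bound on $|B_i|$: the constraint $|E_i|\geq \ell-\lambda+1$ must be used to control both the tail binomial sum and the factor $2^{\ell-|E_i|}$ by quantities depending only on $\lambda$; once that is in hand, the greedy construction and the brute-force fallback finish the argument routinely.
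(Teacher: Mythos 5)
Your proof is correct, and it reaches the theorem by a genuinely different route than the paper. Both arguments share the same high-level dichotomy -- when $\ell$ is large relative to $\lambda\log_2 n$ the answer is always ``yes'', and otherwise $2^\ell$ is polynomial so brute force suffices -- but the mechanism for the large-$\ell$ case differs. The paper works directly with agents: it splits $[\ell]$ into $\lambda$ blocks of length at least $\lceil\log_2 n\rceil$ and, within each block, repeatedly selects the project approved by at least half of the still-unsatisfied agents (possible precisely because of CP and $m=2$), so every agent gains utility $1$ per block. You instead work with outcomes: you exploit CP and $m=2$ to write $u_i(\mathbf{o})=\ell-d_i(\mathbf{o})$, bound the number of outcomes bad for agent $i$ by $O_\lambda(\ell^{\lambda-1})$ (this counting genuinely needs $m=2$, since each position in $E_i$ admits exactly one disagreeing choice), apply a union bound, and then extract a good outcome by the method of conditional expectations on prefixes whenever $|B|<2^\ell$. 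Your approach buys an explicit structural fact -- any no-instance must have $\ell=O_\lambda(\log n)$ -- and a derandomization-style construction that could adapt to other counting arguments; the paper's greedy is more elementary, avoids listing $B$ altogether, and makes the threshold $\ell\lessgtr\lambda\lceil\log_2 n\rceil$ explicit. One small wrinkle: your claim that each term $\binom{|E_i|}{j}$ with $j\le\lambda-1$ is at most $\binom{\ell}{\lambda-1}$ can fail in degenerate cases (e.g.\ when $\lambda-1>\ell/2$); the bound $\binom{|E_i|}{j}\le\ell^{j}\le\max\{1,\ell^{\lambda-1}\}$ is what you want, and it yields the same $O_\lambda(\ell^{\lambda-1})$ conclusion, so nothing downstream is affected.
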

\begin{proof}
    Suppose first that $\ell \leq \lambda\cdot \lceil\log_2 n\rceil\leq \lambda\cdot(\log_2 n+1)$.
    We can then enumerate all $m^\ell\le (2n)^\lambda$
    possible outcomes, compute the egalitarian welfare for each outcome, and return `yes' if there exists an outcome in which the utility of each agent is at least $\lambda$. Therefore, from now we will assume that  
    $\ell > \lambda\cdot\lceil\log_2 n\rceil$. We will now argue that in this case we can greedily construct an outcome which guarantees utility of at least $\lambda$ to each agent, i.e., our instance 
    is a yes-instance of \egaldec{}.
    
    We split $[\ell]$ into $\lambda$ consecutive blocks $T_1, \dots, T_\lambda$ of length at least $\lceil\log_2 n\rceil$ each. It suffices to argue that for each block $T_j$ we can assign projects to timesteps in $T_j$ so that each agent derives positive utility from at least one timestep in $T_j$.
    
    Consider the block $T_1$. We start by setting $N^*=N$, and proceed in $|T_1|$ steps. Let $P=\{p, q\}$.
    At each step $t$, $t=1, \dots, |T_1|$, if at least half of the agents in $N^*$ approve $p$ at timestep $t\in T_1$, we set $o_t=p$ and remove all agents $i$ with $p\in S_{it}$ from $N^*$; otherwise, 
    we set $o_t=q$ and remove all agents $i$ with $q\in S_{it}$ from $N^*$. Note that an agent is removed from $N^*$ only after we ensure that she derives positive utility from at least one timestep in $T_1$. Moreover, since we are in the CP setting, at each timestep $t$
    at least one project in $P$ is supported by at least half of the remaining agents, so at each step we reduce the size of $N^*$ by at least a factor of $2$. It follows that after $\lceil\log_2 n\rceil\le |T_1|$ steps the set $N^*$ is empty, i.e., each agent in $N$ derives positive utility from some timestep in $T_1$; we can then assign projects to remaining timesteps in $T_1$ arbitrarily.
    By repeating this procedure for $T_2, \dots, T_\lambda$, we construct an outcome $\mathbf o$ with $u_i({\mathbf o})\ge \lambda$ for each $i\in N$; moreover, our procedure runs in polynomial time.
\end{proof}


\subsection{Approximation of Egalitarian Welfare}\label{sec:approx}
Theorem~\ref{thm:npcomplete} shows that \egaldec{} is NP-complete even when $\lambda = 1$.
This implies that the problem of computing the \egal{} welfare is inapproximable: an approximation algorithm would be 
able to detect whether a given instance admits an outcome with positive egalitarian social welfare.
However, suppose we redefine each agent's utility function as $u'_i(\mathbf{o})=1+u_i(\mathbf{o})$;
this captures, e.g., settings where there is a timestep in which all agents approve the same project.
We will now show that we can obtain a $\frac{1} {4\log n}$-approximation 
to the optimal egalitarian welfare with respect to the utility profile $(u'_1, \dots, u'_n)$.

\begin{theorem} \label{thm:approx}
    There is a polynomial-time randomized algorithm that, for any $\varepsilon>0$, 
    given an instance $(N, P, \ell, (\mathbf{S}_i)_{i \in N})$, with probability $1-\varepsilon$
    outputs an outcome $\mathbf o$ whose egalitarian social welfare is at least $\frac{1}{4\ln n}$
    of the optimal egalitarian social welfare with respect 
    to the modified utility functions $(u'_1, \dots, u'_n)$.
\end{theorem}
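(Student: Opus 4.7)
The plan is the standard LP-relaxation-and-randomized-rounding recipe for max-min optimization. First, I encode the \egal{} problem under the modified utilities $u'_i$ as an ILP with variables $x_{t,p} \in \{0,1\}$ for each $(t, p) \in [\ell] \times P$ and constraints $\sum_{p \in P} x_{t,p} = 1$ for every $t$, so that exactly one project is selected per timestep. The objective is to maximize $\lambda$ subject to $1 + \sum_{t \in [\ell]} \sum_{p \in S_{it}} x_{t,p} \geq \lambda$ for every agent $i \in N$. Relaxing to $x_{t,p} \geq 0$ gives an LP that can be solved in polynomial time; its optimum $\lambda^*$ upper-bounds the integer optimum, i.e., the egalitarian welfare we are trying to approximate.

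I then round independently at each timestep $t$, choosing project $p$ with probability $x^*_{t,p}$, to obtain a random outcome $\mathbf{o}$. The (unmodified) utility $U_i$ of agent $i$ is a sum of $\ell$ independent $0/1$ random variables, so linearity of expectation gives $\mathbb{E}[U_i] = \sum_t \sum_{p \in S_{it}} x^*_{t,p} \geq \lambda^* - 1$, and hence $\mathbb{E}[u'_i(\mathbf{o})] \geq \lambda^*$.

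The analysis then splits along the threshold $\lambda^* = 4\ln n$. If $\lambda^* \leq 4\ln n$, every outcome already satisfies $u'_i = 1 + U_i \geq 1 \geq \lambda^*/(4\ln n)$, so the approximation holds deterministically. If $\lambda^* > 4\ln n$, I apply the multiplicative Chernoff lower tail to each $U_i$: using $\mathbb{E}[U_i] \geq \lambda^* - 1$ together with the inequality $\lambda^*/(4\ln n) - 1 \leq \mathbb{E}[U_i]/(4\ln n)$, the event $u'_i(\mathbf{o}) < \lambda^*/(4\ln n)$ corresponds to $U_i$ falling below a $1/(4\ln n)$-fraction of its mean, so a per-agent failure probability that is inverse-polynomial in $n$ falls out. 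This is where the constant $4$ is tuned: making $\lambda^* > 4\ln n$ forces $\mathbb{E}[U_i]$ large enough for concentration, and a union bound over the $n$ agents keeps the overall per-round failure probability bounded away from $1$.

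The main obstacle is boosting this constant per-round success probability up to $1 - \varepsilon$ for arbitrary $\varepsilon > 0$. I handle this by standard amplification: run the rounding procedure $\Theta(\log(1/\varepsilon))$ times independently, and since checking whether a given outcome achieves $u'_i(\mathbf{o}) \geq \lambda^*/(4\ln n)$ for all $i$ takes polynomial time, return the best of these rounded outcomes (or any one that passes the check). Independence across rounds drives the overall failure probability below $\varepsilon$, while the total running time remains polynomial in the input size and $\log(1/\varepsilon)$.
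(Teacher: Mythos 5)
Your proposal is correct and follows essentially the same route as the paper: the same ILP/LP relaxation with the $+1$ shift for the modified utilities, independent per-timestep randomized rounding, a case split at the threshold $\lambda^*\le 4\ln n$ (where any outcome suffices), a multiplicative Chernoff lower-tail bound plus a union bound over agents, and standard amplification to reach success probability $1-\varepsilon$. The only cosmetic difference is that you aim the Chernoff bound directly at the $\lambda^*/(4\ln n)$ target, whereas the paper first proves concentration around $\eta^*/5$ with $\delta=4/5$ and then observes $\tfrac15>\tfrac{1}{4\ln n}$ for $n>3$; both yield the claimed guarantee.
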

\begin{proof}
    First, we construct a polynomial-size integer program for finding outcomes
    whose egalitarian welfare with respect to the modified utilities is at least a given quantity $\eta$.
    For each $p \in P$ and $t \in [\ell]$, we define a variable $x_{pt}\in \{0,1\}$: 
    $x_{pt}=1$ encodes that $p$ is selected at time $t$. 
    Our constraints require that (1) for each $t\in [\ell]$, at most one project can be chosen at timestep $t$: 
    $\sum\limits_{p\in P} x_{pt}\leq 1$, and 
    (2) each agent $i\in N$ approves the outcome 
    in at least $\eta-1$ timesteps, so her modified utility is at least $\eta$:
    $\sum\limits_{t=1}^\ell \sum\limits_{p \in S_{it}} x_{pt} + 1 \geq \eta$. 
    We then maximize $\eta$ subject to these constraints.
    By relaxing the $0$-$1$ variables $x_{pt}$ to take values in $\mathbb{R}_+$, 
    we obtain the following LP relaxation:

\begin{align*}
    \textrm{LP-\egal{}}: 
    &\max \eta \\
    &\sum\limits_{p\in P} x_{pt}\leq 1 &\text{ for all } t\in [\ell]\\
    \quad &\sum\limits_{t=1}^\ell \sum\limits_{p \in S_{it}} x_{pt} \geq \eta- 1 
     &\text{ for all } i\in N\\
    &x_{pt}\geq 0 &\text{ for all } p\in P, t\in [\ell].
\end{align*}
Let $\eta^*$ be the optimal (fractional) value of LP-\egal{}, and let $(\{x^*_{pt}\}_{p\in P, t\in[\ell]}, \eta^*)$ be the associated fractional solution.
Let $\eta'$ be the optimal egalitarian
welfare with respect to $u'_1, \dots, u'_n$ in our instance. Then $\eta'$ together with an encoding
of the outcome that provides this welfare forms a feasible solution to LP-\egal{},
and hence $\eta'\le \eta^*$.

For every outcome $\mathbf o$ we have 
$u'_i(\mathbf{o})=u_i(\mathbf{o})+1\geq 1$ for all $i\in N$. Hence, if $\eta^*\leq 4 \ln n$, it holds that every outcome
$\mathbf o$ 
is a $\frac{1} {4\ln n}$-approximation. Thus, we can output an arbitrary outcome 
in this case. Therefore, from now we assume that $\eta^*> 4 \ln n$.

To transform $\{x^*_{pt}\}_{p\in P, t\in[\ell]}$ into a feasible integer solution, 
we use randomized rounding: for each $t\in [\ell]$
we set $o_t=p$ with probability $x^*_{pt}$. 
These choices are independent across timesteps.
For each $i\in N$ and $t\in [\ell]$, define a Bernoulli random variable $Z_i^t$ that
indicates whether agent $i$ approves the project randomly selected at timestep $t$. 
Then, for each agent $i\in N$ let $Z_i = \sum_{t=1}^\ell Z_i^t$, so that 
the utility of agent $i$ is given by 
$u'_i = Z_i + 1$. 
Then, the expected value of $Z_i^t$ is 
\begin{align*}
    \mathbb{E}[Z_i^t] = \sum_{p\in S_{it}} x^*_{pt}.
\end{align*}
By linearity of expectation, we derive 
\begin{align*}
    \mathbb{E}[Z_i] = \sum_{t=1}^\ell \mathbb{E}[Z_i^t] &= \sum_{t=1}^\ell 
    \sum_{p\in S_{it}} x^*_{pt}
    \geq \eta^*-1.
\end{align*}
 Applying the multiplicative Chernoff bound \cite{alonspencer}, we obtain
$$
    \mathbb{P}\{u'_i\leq \eta^*(1-\delta)\} \leq \text{exp}\left(\frac{-\eta^*\delta^2}{2}\right)\text{ for every $\delta>0$}.
$$
Recall that $\eta^*>4 \ln n$.
Thus, by letting $\delta = \frac{4}{5}$, we have 
\begin{align*}
    \mathbb{P}\left\{u'_i\leq \frac{\eta^*}{5}\right\} &\leq \text{exp}\left(-\frac{32\ln n}{25}\right) = n^{-\frac{32}{25}}.
\end{align*}
Finally, by applying the union bound, we get
    \begin{align*}
        \mathbb{P}\left\{u'_i\geq \frac{\eta^*}{5} \text{for all } i\in N\right\} &\geq  1- n \cdot n^{-\frac{32}{25}}
        &= 1- n^{-\frac{7}{25}}> 0.
    \end{align*}
Consequently, with positive probability we obtain an integer solution whose egalitarian welfare (with respect to the modified utilities) is at least $\frac15$ of the optimal egalitarian welfare. Using probability amplification
techniques, we can find some such solution with probability $1-\varepsilon$.
It remains to observe that $\frac{1}{5} > \frac{1} {4\ln n}$ when $n > 3$. 
\end{proof}


\section{Strategyproofness and Non-Obvious Manipulability}\label{sec:nom}
An important consideration in the context of collective decision-making is \emph{strategyproofness}: 
no agent should be able to increase their utility by misreporting their preferences. It is formally defined as follows. Note that agent $i$'s utility function $u_i$ is computed with respect to her (truthful) approval vector $\mathbf{S}_i$.

\begin{definition}[Strategyproofness] \label{defn-SP}
For each $i\in N$, let $\mathcal{S}_{-i}$ denote the list of all approval vectors except that of agent $i$: 
$\mathcal{S}_{-i} = (\mathbf{S}_1,\dots,\mathbf{S}_{i-1},\mathbf{S}_{i+1},\dots,\mathbf{S}_n)$.
    A mechanism $\mechanism$ is \emph{strategyproof} (SP) if for each instance $(N, P, \ell, ({\mathbf S}_i)_{i\in N})$, each
    agent $i \in N$ and each approval vector $\mathbf{B}_i$
    it holds that 
    $u_i(\mechanism(\mathcal{S}_{-i}), \mathbf{S}_i) \geq u_i(\mechanism(\mathcal{S}_{-i}), \mathbf{B}_i)$.
    \end{definition}

Consider the mechanism that outputs a \util{} outcome 
by choosing a project with the highest number of approvals at each timestep 
(breaking ties lexicographically); we will refer to this mechanism as \textsc{GreedyUtil}.
We observe that this mechanism is strategyproof. 

\begin{theorem}\label{thm:util_sp}
    \textsc{GreedyUtil} is strategyproof.
\end{theorem}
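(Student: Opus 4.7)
The plan is to exploit the fact that \textsc{GreedyUtil} operates \emph{timestep-by-timestep} and independently: the project selected at time $t$ depends only on the reported approval sets at time $t$. Agent $i$'s true utility decomposes as $u_i(\mathbf{o}) = \sum_{t \in [\ell]} [o_t \in S_{it}]$, so it suffices to prove the one-timestep claim: for every timestep $t$ and every alternative report $B_{it}$, if the project $o_t^B$ chosen by \textsc{GreedyUtil} under the manipulated profile lies in $S_{it}$, then so does the project $o_t^S$ chosen under truthful reporting. Equivalently (contrapositive), if $o_t^S \notin S_{it}$, then $o_t^B \notin S_{it}$ as well.

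To establish this, fix $t$ with $o_t^S \notin S_{it}$, pick an arbitrary $p^* \in S_{it}$, and an arbitrary manipulated report $B_{it}$; I will show $o_t^B \neq p^*$. Write $n_{p,t}$ for the truthful approval count and $n'_{p,t}$ for the count under the manipulation. The key computation is that the only counts that change when $i$ switches from $S_{it}$ to $B_{it}$ are those of projects in the symmetric difference $S_{it}\triangle B_{it}$. Since $p^*\in S_{it}$, its count can only \emph{decrease} (by $1$ if $p^*\notin B_{it}$, else unchanged); since $o_t^S\notin S_{it}$, its count can only \emph{increase} (by $1$ if $o_t^S\in B_{it}$, else unchanged). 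Hence
\[
n'_{o_t^S,t} \;\geq\; n_{o_t^S,t} \;\geq\; n_{p^*,t} \;\geq\; n'_{p^*,t},
\]
where the middle inequality holds because $o_t^S$ was a maximizer under truthful reporting.

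The main subtlety to dispatch is the lexicographic tie-breaking. I will split on whether the middle inequality is strict. If $n_{o_t^S,t} > n_{p^*,t}$, then $n'_{o_t^S,t} > n'_{p^*,t}$, so $p^*$ is not even a plurality winner under manipulation and cannot be chosen. If $n_{o_t^S,t}=n_{p^*,t}$, then because \textsc{GreedyUtil} selected $o_t^S$ over $p^*$ under truthful reporting, $o_t^S$ precedes $p^*$ lexicographically; then either $n'_{o_t^S,t}>n'_{p^*,t}$ (and $p^*$ again loses on the count), or $n'_{o_t^S,t}=n'_{p^*,t}$ and $o_t^S$ still precedes $p^*$ lexicographically, so $p^*$ is not picked. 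Either way, $o_t^B \neq p^*$; since $p^*\in S_{it}$ was arbitrary, $o_t^B\notin S_{it}$, completing the per-timestep argument and hence the theorem.

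I do not anticipate a real obstacle here: the argument is essentially a monotonicity check coupled with a careful bookkeeping of how one voter's switch perturbs approval counts. The only step requiring care is making sure the lexicographic tie-breaker is preserved after the perturbation, which is handled by the case split above.
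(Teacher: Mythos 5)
Your proof is correct and follows essentially the same route as the paper's: decompose by timestep, take the contrapositive, and observe that a truthful approver's deviation can only raise the count of the (unapproved) truthful winner and lower the count of any approved project, so no approved project can overtake it even after accounting for lexicographic tie-breaking. You merely spell out the count-perturbation bookkeeping that the paper compresses into ``agent $i$ cannot change this by modifying her report.''
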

\begin{proof}
    Consider an agent $i \in N$ and a timestep $t\in [\ell]$. 
    Since \textsc{GreedyUtil} makes a decision for each timestep independently of agents' reports regarding other timesteps, it suffices to argue that $i$ cannot increase her utility at timestep $t$ by reporting an approval set 
    $S\neq S_{it}$. This follows directly from the fact that Approval Voting
    is strategyproof in the single-winner setting (see, e.g., \cite{BF83}); for completeness, we provide  a direct proof.

    Suppose that when $i$ truthfully reports $S_{it}$ at timestep $t$, {\sc GreedyUtil} set $o_t=p$. If $p\in S_{it}$, agent $i$ cannot benefit from misreporting at timestep $t$, so assume that 
    $p \notin S_{it}$. Then for every project $q\in S_{it}$ it holds that either $q$ gets fewer approvals
    than $p$ at timestep $t$, or $p$ and $q$ receive the same number of approvals, but $p$ precedes $q$
    in the tie-breaking order. Agent $i$ cannot change this by modifying her report, so she cannot increase
    her utility at timestep $t$. As this holds for every $t\in [\ell]$, the proof is complete.
\end{proof}
In contrast, no deterministic mechanism that maximizes egalitarian welfare can be strategyproof. Intuitively, 
this is because agents have an incentive to not report their approval for projects that are popular among other agents.

\begin{proposition}\label{prop:sp_no}
    Let $\mechanism$ be a deterministic mechanism that always outputs an \egal{} outcome. Then $\mechanism$ is not strategyproof, even in the \emph{UP} setting.
\end{proposition}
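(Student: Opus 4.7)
The plan is to exhibit a small UP instance with exactly two \egal{} outcomes favoring different subsets of agents, and then argue by case analysis on $\mathcal{M}$'s output that a disadvantaged agent can always misreport so as to make the other outcome the unique egalitarian optimum.

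Concretely, I would take $n=4$ agents, $\ell=2$ timesteps, and $m=2$ projects $\{a,b\}$. Let agents $1$ and $4$ approve $a$ at both timesteps, and let agents $2$ and $3$ approve $b$ at $t=1$ and $a$ at $t=2$. A short enumeration shows the egalitarian welfare is maximized at value $1$ by exactly two outcomes: $(a,a)$, with utility profile $(2,1,1,2)$, and $(b,a)$, with utility profile $(1,2,2,1)$; every other outcome has egalitarian welfare $0$.

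I would then split into two cases on $\mathcal{M}$'s output. If $\mathcal{M}$ returns $(a,a)$, I let agent $2$ (currently at true utility $1$) misreport as approving $b$ at both timesteps; a direct check of the four outcomes in the modified instance shows that $(a,a)$ and $(b,b)$ drop to egalitarian welfare $0$, while $(b,a)$ remains at egalitarian welfare $1$ and so is the unique \egal{} outcome, forcing $\mathcal{M}$ to pick $(b,a)$ and raising agent $2$'s true utility to $2$. Symmetrically, if $\mathcal{M}(I) = (b,a)$, agent $1$ misreports as approving $a$ at $t=1$ and $b$ at $t=2$, which makes $(a,a)$ the unique \egal{} outcome and boosts agent $1$'s true utility from $1$ to $2$.

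The main subtlety is arranging each misreport so that the egalitarian-welfare-$1$ level set collapses to a \emph{singleton}, leaving no room for $\mathcal{M}$'s tie-breaking to defeat the manipulation. The paired structure of the instance (agents $1$ and $4$ versus agents $2$ and $3$) is the device that accomplishes this: the ``seconding'' agent's unchanged preferences prevent alternatives such as $(b,b)$ or $(a,b)$ from creeping back into the \egal{} set after a single defection, which is why four agents are used rather than two.
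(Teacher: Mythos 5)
Your proof is correct: the enumeration of the four outcomes is right, and in each of the two cases the misreport does collapse the set of \egal{} outcomes to a singleton that the deviating agent strictly prefers (I checked both modified instances). This is essentially the same approach as the paper, which also builds a small UP instance with several \egal{} outcomes and has a disadvantaged agent misreport so that her preferred outcome becomes the unique egalitarian optimum; the paper uses three symmetric agents and a dummy fourth project with a without-loss-of-generality argument where you use four agents, two projects, and an explicit two-case analysis, but the underlying idea is identical.
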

\begin{proof}
    Consider an instance with $P = \{p_1,p_2,p_3, p_4\}$, $n=3$, $\ell = 2$, and the  approval sets $\mathbf{S}_1, \mathbf{S}_2, \mathbf{S}_3$ such that $S_{11} = S_{21} = S_{31} = \{p_1\}$ and $S_{i2} = \{p_i\}$ for each $i \in \{1,2,3\}$.

Consider an outcome $\mathbf{o} = (o_1,o_2)$.
If $o_1\neq p_1$, at most one agent receives positive utility from $\mathbf o$, so the egalitarian welfare of $\mathbf o$ is $0$. 
In contrast, if $o_1=p_1$, the egalitarian welfare of $\mathbf o$ is at least $1$.
Thus, when agents report truthfully, $\mechanism$ outputs an outcome ${\mathbf o}^* = (o_1^*, o_2^*)$ with $o^*_1=p_1$.
 
Assume without loss of generality that $o^*_2 =  p_2$. Then $u_1({\mathbf o}^*)=1$. 
Now, suppose that agent~$1$ misreports their approval vector as $\mathbf{S}'_1 = (S'_{11}, S'_{12})$, where $S'_{11}=\{p_4\}$, 
$S'_{12}=\{p_1\}$.
In this case, the only \egal{} outcome for the reported preferences is ${\mathbf o}' = (p_1,p_1)$, so $\mechanism$ is forced to output
${\mathbf o}'$. Moreover, agent~$1$'s utility (with respect to their true preferences) 
from ${\mathbf o}'$ is $u_1(\mathbf{o}') = 2 > u_1(\mathbf{o}^*)$, 
i.e., agent $1$ has an incentive to misreport.
\end{proof}

Having ruled out compatibility of \egal{} and strategyproofness,
we consider a relaxation of strategyproofness known as \emph{non-obvious manipulability}. It was introduced by \citet{troyan2020obvious}, and
has been studied in the single-round multiwinner voting literature \cite{arribillaga2024obviousmanipulations,aziz2021obviousmanipulability}.
It is formally defined as follows.

\begin{definition}[Non-Obvious Manipulability] \label{defn-NOM}
    A mechanism $\mechanism$ is \emph{not obviously manipulable (NOM)} if for 
    every instance $(N, P, \ell, ({\mathbf S}_i)_{i\in N})$, 
    each agent $i \in N$, and each approval vector $\mathbf{B}_i$ that $i$ may report, 
    the following  conditions hold:
        \begin{align*}
        \min_{\mathcal{S}_{-i}\in \Sigma_{P, \ell}^{n-1}} u_i(\mechanism(\mathcal{S}_{-i}, \mathbf{S}_i)) &\geq  \min_{\mathcal{S}_{-i}\in\Sigma_{P, \ell}^{n-1}} u_i(\mechanism(\mathcal{S}_{-i}, \mathbf{B}_i))\\
        \max_{\mathcal{S}_{-i}\in\Sigma_{P, \ell}^{n-1}} u_i(\mechanism(\mathcal{S}_{-i}, \mathbf{S}_i)) &\geq  \max_{\mathcal{S}_{-i}\in\Sigma_{P, \ell}^{n-1}} u_i(\mechanism(\mathcal{S}_{-i}, \mathbf{B}_i)),
        \end{align*}
    where $\Sigma_{P, \ell}^{n-1}$ denotes the space of all $(n-1)$-voter profiles
    where each voter expresses her approvals of projects in $P$ over $\ell$ steps.
\end{definition}

Intuitively, a mechanism is NOM if an agent cannot increase her worst-case utility or her best-case utility (with respect to her true utility function) by misreporting.
Clearly, strategyproofness implies NOM: if a mechanism is strategyproof, 
no agent can increase her utility in \emph{any} case by misreporting.

While NOM is a much weaker condition than strategyproofness, it turns out that
it is still incompatible with \egal{}.

\begin{proposition}\label{prop:NOM_no}
    Let $\mechanism$ be a deterministic mechanism that always outputs an \egal{} outcome. Then $\mechanism$ is not NOM.
\end{proposition}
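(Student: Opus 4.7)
The plan is to show that for any deterministic mechanism $\mechanism$ producing an \egal{} outcome, there exist an instance, an agent $i$, a truthful approval vector $\mathbf{S}_i$, and a misreport $\mathbf{B}_i$ that jointly violate the min-clause of NOM. First, I will argue that the max-clause of NOM is always satisfied with equality, so any violation must come from the min-clause. For any truthful $\mathbf{S}_i$, consider the profile $\mathcal{S}_{-i}$ in which every agent $j \ne i$ reports $\mathbf{S}_j = \mathbf{S}_i$; then all agents share the same utility function, every \egal{} outcome is one assigning $o_t \in S_{it}$ whenever $S_{it} \ne \varnothing$, and so agent $i$ obtains utility $|\{t \in [\ell] : S_{it} \ne \varnothing\}|$ regardless of how $\mechanism$ breaks ties. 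Since this quantity is also an upper bound on $u_i$ under any report, we get $\max_{\mathcal{S}_{-i}} u_i(\mechanism(\mathcal{S}_{-i}, \mathbf{B}_i)) \le \max_{\mathcal{S}_{-i}} u_i(\mechanism(\mathcal{S}_{-i}, \mathbf{S}_i))$ for every $\mathbf{B}_i$, establishing that the max-clause always holds.

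Next, I will reuse the construction from the proof of Proposition~\ref{prop:sp_no}, with $P = \{p_1, p_2, p_3, p_4\}$, $n = 3$, $\ell = 2$, and $\mathbf{S}_1 = (\{p_1\}, \{p_1\})$, $\mathbf{S}_2 = (\{p_1\}, \{p_2\})$, $\mathbf{S}_3 = (\{p_1\}, \{p_3\})$. I will exhibit an adversarial profile $\mathcal{S}^*_{-1}$ (for example, one in which agents $2$ and $3$ express sharply conflicting approvals disjoint from $\{p_1\}$) such that the reported \egal{} welfare is~$0$ and all outcomes are tied, so the deterministic mechanism $\mechanism$ outputs some fixed outcome that may be chosen to be disjoint from $\mathbf{S}_1$. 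This shows $\min_{\mathcal{S}_{-1}} u_1(\mechanism(\mathcal{S}_{-1}, \mathbf{S}_1)) = 0$.

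Finally, I will exhibit a misreport $\mathbf{B}_1$, inspired by the lie $(\{p_4\}, \{p_1\})$ used in Proposition~\ref{prop:sp_no} but extended so as to make the overlap property uniform over $\mathcal{S}_{-1}$, such that for every profile $\mathcal{S}_{-1}$ and every deterministic tie-breaking, each reported \egal{} outcome necessarily assigns $p_1$ to some timestep $t$ with $p_1 \in S_{1t}$. This forces $u_1(\mechanism(\mathcal{S}_{-1}, \mathbf{B}_1)) \ge 1$ uniformly in $\mathcal{S}_{-1}$, which gives the strict NOM min-clause violation.

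The main obstacle will be producing a misreport $\mathbf{B}_1$ with the uniform ``forced overlap'' property across all $\mathcal{S}_{-1}$: one must show, by a combinatorial case analysis on the structure of reported \egal{} outcomes, that outcomes failing to assign $p_1$ somewhere in agent~$1$'s truly approved timesteps always have strictly smaller reported \egal{} welfare than some outcome that does, thereby ruling out any mechanism-side tie-breaking exploit.
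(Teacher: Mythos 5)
There is a genuine gap, and it sits exactly where you flag ``the main obstacle''. The proposition quantifies over \emph{all} deterministic \egal{}-maximizing mechanisms, so you must produce a violation for each fixed mechanism; you do not get to choose how ties are broken. Your step two assumes that on a profile where every outcome has reported egalitarian welfare $0$ the mechanism's output ``may be chosen to be disjoint from $\mathbf{S}_1$'', but when all outcomes are tied, $(p_1,p_1)$ is also an \egal{} outcome and a mechanism may well select it. In fact the instance you reuse defeats your whole plan for the concrete mechanism $\mechanism_{\text{lex}}$: under truthful reporting of $(\{p_1\},\{p_1\})$, either the reported egalitarian optimum is positive (so agent $1$ gets utility at least $1$) or all outcomes are tied at $0$ and lexicographic tie-breaking hands agent $1$ her maximum, $2$; hence her truthful worst case equals $1$ (attained when agents $2$ and $3$ both report $(\{p_2\},\{p_2\})$). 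A min-clause violation would then require a misreport guaranteeing true utility $2$, i.e., $p_1$ at both timesteps against every $\mathcal{S}_{-1}$, which is impossible against that same profile. The same tie-breaking issue undermines step three: against reports $(\{p_2\},\{p_2\})$ and $(\{p_3\},\{p_3\})$, every outcome has reported egalitarian welfare $0$ under any $\mathbf{B}_1$, so the ``uniform forced overlap'' property you hope for cannot hold for all mechanisms.

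The paper's proof avoids both problems by leaving the complete-preferences world: it uses two agents, two timesteps, empty approval sets at $t=1$, and conflicting singletons at $t=2$. The unique profile on which the mechanism is unconstrained is normalized once, without loss of generality (assume it picks $p_1$ at $t=2$ there), and is placed in the \emph{misreport} branch; agent $1$'s true approval set at $t=1$ is all of $P$, so she is satisfied at $t=1$ whatever the tie-break, and the normalization pins down $t=2$. In the truthful branch the mechanism is \emph{forced}, not merely permitted, to select agent $2$'s project at $t=2$, since that is the only way to attain positive egalitarian welfare. This asymmetry---ties resolved by a one-time normalization in the misreport branch, genuine forcing in the truthful branch---is the idea your proposal is missing. (Your preliminary observation that the max-clause always holds is correct but not needed: a min-clause violation alone refutes NOM.)
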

\begin{proof}
    We will prove that an agent can increase her worst-case utility, and hence the mechanism fails NOM.

    Fix $P = \{p_1,p_2\}$, $n = \ell = 2$. 
    Consider first an instance ${\mathcal I}=(N, P, \ell, ({\mathbf S}_i)_{i\in N})$
    with $S_{11} = S_{21} = \varnothing$,  $S_{12} = \{p_1\}$, and $S_{22} = \{p_2\}$.
    Let $\mathbf{o} = (o_1, o_2)$ be the output of $\mechanism$ on this instance;
    assume without loss of generality that $o_2 = p_1$. 

    Now, consider an instance ${\mathcal I}'=(N, P, \ell, ({\mathbf S}'_i)_{i\in N})$
    with $S'_{11} = \{p_1, p_2\}$, $S'_{21} = \varnothing$,  $S'_{12} = \{p_1\}$, and $S'_{22} = \{p_2\}$.
    For $\mathbf{o'}=(o_1', o_2')$ to be an \egal{} outcome for this instance, 
    it has to provide positive utility to both agents; this is only possible if $o'_2=p_2$.
    Thus, the utility of agent $1$ from the outcome selected by $\mechanism$ is $1$.
    
    However, we will now argue that if the first agent misreports her approval vector as $(\varnothing, \{p_1\})$, 
    she is guaranteed utility $2$ no matter what the second agent reports, i.e., her worst-case utility is~$2$.

    Indeed, suppose agent~$2$ reports $(\varnothing, \{p_2\})$, in which case the agents have the same approval vectors as in $\mathcal I$. We assumed that on $\mathcal I$ 
    mechanism $\mechanism$
    selects $p_1$ at timestep $2$ (and one of $p_1, p_2$ at timestep 1), so the utility of agent 1 (with respect to her true preferences in ${\mathcal I}'$) is $2$.
    On the other hand, if agent~$2$ reports $(\varnothing, \{p_1\})$ or $(\varnothing, \{p_1, p_2\})$, 
    then $\mechanism$
    selects $p_1$ at timestep $2$ (and one of $p_1, p_2$ at timestep 1), as this 
    is the only way to guarantee positive utility to both agents.
    Similarly, if $S_{21}\neq\varnothing$, there is a way to provide positive utility to both agents, by 
    selecting a project
    from $S_{21}$ at the first timestep, and $p_1$ at the second timestep. However, for agent 1 to obtain positive utility it is necessary to select $p_1$ at the second timestep, so the mechanism is forced to do so.
    That is,  if agent~$1$ reports $(\varnothing, \{p_1\})$, $\mechanism$ selects
    an outcome $(o^*_1, o^*_2)$ with $o^*_2=p_1$, and this outcome provides utility~$2$
    to agent~$1$ according to their preferences in ${\mathcal I}'$.
\end{proof}
However, we obtain a positive result for the CP setting.
Let $\mechanism_\text{lex}$ be the mechanism that outputs an \egal{}
outcome, breaking ties in favor of agents with lower indices. 
Formally, we define a partial order $\succ_\text{lex}$ on the set $\Pi({\mathcal I})$ of possible outcomes 
for a given instance $\mathcal I$ as follows: 
(1) if $\egal({\mathbf o})>\egal({\mathbf o}')$, 
then $\mathbf o\succ_\text{lex}{\mathbf o}'$;
(2) if $\egal({\mathbf o}) = \egal({\mathbf o}')$
and there is an $i\in N$ such that 
$u_{i'}({\mathbf o})=u_{i'}({\mathbf o}')$ for $i'<i$
and $u_{i}({\mathbf o})>u_{i}({\mathbf o}')$
then $\mathbf o\succ_\text{lex}{\mathbf o}'$.
Note that two outcomes are incomparable under $\succ_\text{lex}$ if and only if they provide the same utility to all agents; we say that such outcomes are {\em utility-equivalent}, 
and complete $\succ_\text{lex}$ to a total order $\succ$ on $\Pi({\mathcal I})$ arbitrarily.
$\mechanism_\text{lex}$ outputs an outcome $\mathbf o$
with $\mathbf o\succ{\mathbf o}'$ for all ${\mathbf o}'\in\Pi({\mathcal I})\setminus\{{\mathbf o}\}$.

\begin{theorem}\label{thm:NOM_yes_CP}
    $\mechanism_\text{lex}$ is \emph{NOM} in the \emph{CP} setting.
\end{theorem}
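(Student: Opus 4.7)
My plan is to verify the two conditions in the NOM definition separately; the best-case condition is straightforward, while the worst-case condition is the main challenge.

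For the best-case condition, I will exhibit an adversarial profile under which truthful reporting achieves the maximum possible utility $\ell$: consider the profile $\mathcal{S}_{-i}$ in which every other agent $j\neq i$ reports $\mathbf{S}_j = \mathbf{S}_i$. Because CP holds, each $S_{it}$ is non-empty, so any outcome selecting a project from $S_{it}$ at every timestep $t$ yields egalitarian welfare $\ell$ and is therefore egal-maximizing; agent $i$ then obtains utility $\ell$ from any such outcome. As $\ell$ is the overall maximum, $\max_{\mathcal{S}_{-i}} u_i(\mechanism_\text{lex}(\mathcal{S}_{-i}, \mathbf{S}_i)) = \ell$, which trivially dominates the best-case utility under any misreport $\mathbf{B}_i$.

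For the worst-case condition, I fix an arbitrary misreport $\mathbf{B}_i \neq \mathbf{S}_i$ and let $w_T = \min_{\mathcal{S}_{-i}} u_i(\mechanism_\text{lex}(\mathcal{S}_{-i}, \mathbf{S}_i))$. The objective is to construct an adversarial profile $\mathcal{S}^\#_{-i}$ such that $u_i(\mechanism_\text{lex}(\mathcal{S}^\#_{-i}, \mathbf{B}_i)) \leq w_T$, where $u_i$ is evaluated with respect to $\mathbf{S}_i$. The guiding intuition is that lex tie-breaking favors outcomes boosting each agent's \emph{reported} utility, which coincides with the agent's true utility only under truthful reporting; by misreporting, agent $i$ allows the mechanism to pick outcomes high in reported utility but potentially low in true utility, and the adversary can exploit precisely this gap. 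Concretely, I plan to select a timestep $t^*$ where $B_{it^*} \neq S_{it^*}$ and tailor other agents' reports at $t^*$ (and across other timesteps, adapting the structure of a worst-case profile for truthful reporting) so that the mechanism is compelled to choose a project outside $S_{it^*}$, thereby reducing agent $i$'s true utility below $w_T$.

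The main obstacle I anticipate is handling the arbitrary total order $\succ$ that completes the lex partial order on utility-equivalent outcomes. Since the mechanism's choice among such outcomes is not under adversarial control, I will need a careful case analysis on the structure of $\mathbf{B}_i$ relative to $\mathbf{S}_i$ (whether $B_{it}$ strictly adds approvals, strictly removes approvals, or both at each differing timestep), ensuring that in each case the adversary can construct $\mathcal{S}^\#_{-i}$ so that every outcome consistent with the mechanism's selection criteria is bad for true utility. Additionally, because the mechanism optimizes egalitarian welfare globally across all timesteps, coordinating deviations at multiple timesteps requires care; the CP assumption is essential here, as it lets the adversary populate all other agents' approval sets with meaningful approvals at every timestep, providing the flexibility needed to pin down the mechanism's global selection.
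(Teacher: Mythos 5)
Your treatment of the best-case condition is correct and matches the paper's: in the CP setting the profile where all other agents copy $\mathbf{S}_i$ forces every \egal{}-maximizing outcome to give agent $i$ utility $\ell$, so the best-case bound is immediate. However, for the worst-case condition your proposal contains only a goal statement, not an argument, and the plan you sketch would not go through as described. Picking a single timestep $t^*$ with $B_{it^*}\neq S_{it^*}$ and forcing the mechanism to choose outside $S_{it^*}$ there does not bound agent $i$'s total true utility by the truthful worst case $w_T$: the mechanism optimizes egalitarian welfare globally, and losing one timestep says nothing about the other $\ell-1$. More fundamentally, you never connect the adversarial profile $\mathcal{S}^\#_{-i}$ you intend to build against $\mathbf{B}_i$ to a profile witnessing $w_T$ against $\mathbf{S}_i$, which is the crux of the whole proof.

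The paper's argument supplies exactly the machinery your sketch is missing. It decomposes the move from $\mathbf{S}_i$ to $\mathbf{B}_i$ into elementary steps, performing all approval \emph{additions} first (so every intermediate vector stays in CP) and then all \emph{deletions}. For the addition phase it proves a pointwise statement: for \emph{every} fixed $\mathcal{S}_{-i}$, reporting the augmented vector cannot increase $i$'s utility; this requires a delicate comparison of the egalitarian welfare of the two mechanism outputs under the two reports, together with the leximin tie-breaking (which is where the ``arbitrary completion of $\succ_\text{lex}$'' issue you worry about is actually resolved: incomparable outcomes are utility-equivalent, so the completion is harmless). For each deletion of a project $p$ at timestep $t$, it gives a \emph{matching} argument: for every profile $\mathcal{S}_{-i}$ against the new report, the profile obtained by swapping all other agents' approvals of $p$ and some still-approved $p'$ at timestep $t$ yields, against the old report, an outcome with at least as much true utility for $i$ (since the deleted $p$ lies in $S_{it}$ and is truly approved). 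Chaining these steps bounds the worst case under $\mathbf{B}_i$ by the worst case under $\mathbf{S}_i$. Without the decomposition, the pointwise claim for additions, and the swap construction for deletions, your proposal does not constitute a proof.
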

\begin{proof}
    In the CP setting, the best-case utility of each agent when they report truthfully is $\ell$: this is achieved, e.g., if all other agents have the same preferences. Thus, it remains to establish that under $\mechanism_\text{lex}$
    no agent can improve their worst-case utility by misreporting.

Let $\mathbf{S}_i$ be the true approval vector of agent $i$, and let $\mathbf{B}_i$ be another approval
vector that $i$ may report. Consider a minimum-length sequence of elementary operations that transforms $\mathbf{S}_i$
into $\mathbf{B}_i$ by first adding approvals in $B_{it}\setminus S_{it}$, $t\in [\ell]$, one by one,
and then removing approvals in $S_{it}\setminus B_{it}$, $t\in [\ell]$, one by one.
Let $\mathbf{X}^0, \mathbf{X}^1, \dots, \mathbf{X}^k,\dots, \mathbf{X}^{\gamma+1}$ be the resulting sequence
of approval vectors, with $\mathbf{X}^0=\mathbf{S}_i$, $\mathbf{X}^{\gamma+1}=\mathbf{B}_i$. Note that, since we add approvals first,   
all entries of each approval vector in this sequence are non-empty subsets of $P$, i.e., we remain in the CP setting.
Suppose this sequence starts with $k$ additions, so that 
$\mathbf{X}^{s}$ is obtained from $\mathbf{X}^{s-1}$ by adding a single approval if $s\le k$
and by deleting a single approval if $s>k$.

We will first show that, for any fixed list ${\mathcal S}_{-i}$ of other agents' approval vectors, reporting $\mathbf{X}^{k}$ instead of
$\mathbf{X}^{0}={\mathbf S}_i$ does not increase $i$'s utility; this implies that reporting $\mathbf{X}^{k}$ instead of
$\mathbf{X}^{0}$ does not increase $i$'s worst-case utility.
Then, we will show that for all $s=k+1, \dots, \gamma+1$ reporting $\mathbf{X}^{s}$ instead of
$\mathbf{X}^{s-1}$ does not increase $i$'s worst-case utility either. This implies that reporting
$\mathbf{B}_i$ instead of $\mathbf{S}_i$ does not increase $i$'s worst-case utility.

Fix a list $\mathcal{S}_{-i}$ of other agents' approval vectors, and
let $\mathcal S = (\mathcal{S}_{-i}, \mathbf{S}_i)$, 
${\mathcal S}' = (\mathcal{S}_{-i}, \mathbf{X}^k)$.
Set $\mechanism_\text{lex}({\mathcal S}) = \mathbf o$, $\mechanism_\text{lex}({\mathcal S}') = {\mathbf o}'$. 
If ${\mathbf o}'$ and $\mathbf o$ are utility-equivalent at $\mathcal S$, we are done, as this means that $i$ does not benefit from reporting ${\mathbf X}^k$ instead of ${\mathbf S_i}$; hence, assume that this is not the case.
Let $\eta$ be the egalitarian welfare of ${\mathbf o}'$ with respect to the reported utilities at ${\mathcal S}'$,
and let $\eta_i$ be the utility of $i$ at ${\mathbf o}'$ according to $\mathbf{X}^k$;
note that $\eta_i\ge \eta$. Moreover, since ${\mathbf X}^k$ is obtained from ${\mathbf X}^0$ by adding approvals, it holds that $\eta_i\ge u_i({\mathbf o}')$.
By choosing ${\mathbf o}'$ at $\mathcal{S}$, the mechanism can guarantee
utility $\eta$ to all agents other than $i$, and $u_i({\mathbf o}')\le \eta_i$ to $i$. 
If $u_i({\mathbf o}')\le \eta$, the egalitarian welfare of choosing ${\mathbf o}'$
at $\mathcal{S}$ is $u_i({\mathbf o}')$, so under
any \egal{} outcome at $\mathcal{S}$ (in particular, under the outcome chosen by $\mechanism_\text{lex}$)  the utility of $i$
is at least $u_i({\mathbf o}')$; hence, in this case we are done.

Now, suppose $u_i({\mathbf o}')> \eta$.
Then the egalitarian
welfare of ${\mathbf o}'$ at $\mathcal{S}$ is $\eta$, so
the egalitarian welfare of ${\mathbf o}$ at $\mathcal{S}$ is at least $\eta$.
Moreover, it cannot be strictly higher than $\eta$, because then 
the egalitarian welfare of ${\mathbf o}$ at ${\mathcal S}'$ according to the reported
utilities would be strictly higher than $\eta$ as well, a contradiction with
$\mechanism_\text{lex}$ outputting ${\mathbf o}'$ on $\mathcal{S}'$.
Thus, ${\mathbf o}'$ and $\mathbf o$ provide the same egalitarian welfare at $\mathcal{S}$. As we assumed that ${\mathbf o}'$ and $\mathbf o$ are not utility-equivalent, it has to be the case that $\mechanism_\text{lex}$ chooses ${\mathbf o}$ over ${\mathbf o}'$
at $\mathcal{S}$ because of condition~(2) in the definition of~$\succ_\text{lex}$.
Let $j$ be the smallest index
such that $u_j({\mathbf o}) > u_j({\mathbf o}')$. If $j\ge i$, 
we are done, as this means that $u_i({\mathbf o}) = u_i({\mathbf o}')$, 
so $i$ does not benefit from reporting $\mathbf{X}^k$ instead of $\mathbf{S}_i$.
Now, suppose that $j<i$. Note that
${\mathbf o}'$ and $\mathbf o$ provide the same egalitarian welfare at ${\mathcal S}'$, and
agents $1, \dots, j$ have the same preferences
in ${\mathcal S}'$ and ${\mathcal S}$. 
Hence, $\mechanism_\text{lex}$ should choose ${\mathbf o}$ over ${\mathbf o}'$ at ${\mathcal S}'$, 
a contradiction with the choice of~${\mathbf o}'$.
    
We will now consider $s > k$ and 
    argue that for every $\mathcal{S}_{-i}$ there is an $\mathcal{S}'_{-i}$ such that 
    $u_i(\mechanism_\text{lex}(\mathcal{S}_{-i}, \mathbf{X}^{s-1})) \geq u_i(\mechanism_\text{lex}(\mathcal{S}'_{-i}, \mathbf{X}^{s}))$. 
    Suppose that $\mathbf{X}^s$ is obtained from $\mathbf{X}^{s-1}$ by deleting a project $p$ at timestep $t$. 
    Let $\mathbf{o} = \mechanism_\text{lex}(\mathcal{S}_{-i}, \mathbf{X}^{s-1})$. 
    If $o_t\neq p$ then the outcome $\mechanism_\text{lex}(\mathcal{S}_{-i}, \mathbf{X}^{s})$ is utility-equivalent to $\mathbf o$.
    Otherwise, consider a project $p' \neq p$ approved by agent $i$ at timestep $t$ according to $\mathbf{X}^{s}$. 
    We construct $\mathcal{S}'_{-i}$ by swapping all other agents' approvals 
    for $p$ and $p'$ at timestep $t$. 
    Consider an outcome ${\mathbf o}'$ that selects $p'$ a timestep $t$ and coincides with $\mathbf o$ at all other timesteps. 
    
    For each agent $j\in N$ it holds that the utility of $j$ from $\mathbf o$ at $(\mathcal{S}_{-i}, \mathbf{X}^{s-1})$ is the same as her utility from ${\mathbf o}'$ at $(\mathcal{S}'_{-i}, \mathbf{X}^{s})$. We claim that this implies that when we execute $\mechanism_\text{lex}$ on 
    $(\mathcal{S}'_{-i}, \mathbf{X}^{s})$, it chooses an outcome that is utility-equivalent to ${\mathbf o}'$. Indeed, suppose there is an outcome $\widetilde{{\mathbf o}'}$ 
    such that at $(\mathcal{S}'_{-i}, \mathbf{X}^{s})$ it holds that $\widetilde{{\mathbf o}'}\succ_\text{lex} {\mathbf o}'$. 
    Consider at outcome $\widetilde{\mathbf o}$ that is identical to $\widetilde{{\mathbf o}'}$ except that if $\widetilde{{\mathbf o}'}$ chooses $p'$ at $t$ then $\widetilde{\mathbf o}$ chooses $p$ at $t$. Then $\widetilde{{\mathbf o}}\succ_\text{lex} {\mathbf o}$, a contradiction with our choice of $\mathbf o$.

    On the other hand, as agent $i$ approves $p$, we have $u_i(\mathbf{o})\geq u_i(\mathbf{o}')$.
    Hence,  for each $\mathcal{S}_{-i}$ there is an $\mathcal{S}'_{-i}$ such that $u_i(\mechanism_\text{lex}(\mathcal{S}_{-i}, \mathbf{X}^{s-1})) \geq u_i(\mechanism_\text{lex}(\mathcal{S}'_{-i}, \mathbf{X}^s))$, 
    i.e., reporting $\mathbf{X}^s$ instead of $\mathbf{X}^{s-1}$ does not increase $i$'s worst-case utility.
    As this holds for all $s>k$, the proof is complete.
\end{proof}


\section{Proportionality}\label{sec:prop}
Another property that may be desirable in the context
of temporal voting (and has been considered by others in similar settings 
\cite{conitzer2017fairpublic,elkind2022temporalslot})
is \emph{proportionality} (PROP).
\begin{definition}[Proportionality]  \label{defn-prop}
    Given an instance ${\mathcal I} = (N, P, \ell, ({\mathbf S}_i)_{i\in N})$, 
    for each $i\in N$ let $\mu_i = |\{t \in [\ell]: S_{it} \neq \varnothing\}|$.
    We say that an outcome $\mathbf{o}$ is \emph{proportional (PROP)} 
    for $\mathcal I$ if for all $i \in N$ 
    it holds that $u_i(\mathbf{o}) \geq \lfloor \frac{\mu_i}{n} \rfloor$.
\end{definition}

We note that proportionality is often understood as guaranteeing each agent at least
$1/n$-th of her maximum utility, which would correspond to using $\frac{\mu_i}{n}$
instead of $\lfloor \frac{\mu_i}{n} \rfloor$ in the right-hand side of our 
definition~\cite{conitzer2017fairpublic,elkind2022temporalslot,igarashi2023repeated}.
However, the requirement that $u_i(\mathbf{o}) \geq \frac{\mu_i}{n}$ may be impossible 
to satisfy: e.g. if $N=\{1, 2\}$, $\ell=3$, $P=\{p_1, p_2\}$ and for $i=1,2$
agent $i$ approves project $p_i$ at each timestep, we cannot simultaneously
guarantee utility $3/2$ to both agents. 
Moreover, the proof of Theorem~\ref{thm:npcomplete} 
can be used to show that deciding whether a given instance admits an outcome
$\mathbf o$ with $u_i(\mathbf{o}) \geq \frac{\mu_i}{n}$ for all $i\in N$
is NP-complete. 
\begin{proposition}\label{prop:prop}
    Given an instance ${\mathcal I}=(N, P, \ell, ({\mathbf S}_i)_{i\in N})$, it is {\em NP}-complete to decide whether there exists an outcome $\mathbf o$ such that $u_i({\mathbf o})\ge \mu_i/n$ for each $i\in N$. The hardness result holds even in the UP setting.
\end{proposition}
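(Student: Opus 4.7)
The plan is to reuse, almost verbatim, the reduction from Vertex Cover used in the proof of Theorem~\ref{thm:npcomplete}, and verify that the bound $\mu_i/n$ collapses to the integer bound $1$ in the instances produced.

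First I would note that membership in NP is immediate: given a guessed outcome $\mathbf o$, one can compute each $u_i(\mathbf o)$ and each $\mu_i$ in polynomial time and check the $n$ rational inequalities. For hardness, I would reduce from \textsc{Vertex Cover}, using essentially the same construction as in Theorem~\ref{thm:npcomplete}. Given $(G,k)$ with $|V|=s$ and $|E|=r$, assume without loss of generality that $r\ge k$ (otherwise selecting one endpoint per edge already yields a vertex cover of size less than $k$, so the instance is trivially a yes-instance). Build the same temporal election: $r+s-k$ agents (edge agents and dummy agents), $\ell=s$ timesteps, and the singleton approval sets defined exactly as in the proof of Theorem~\ref{thm:npcomplete}. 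This is a UP instance.

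The key observation is that in this construction every agent has a non-empty approval set at every timestep, so $\mu_i=\ell=s$ for each $i\in N$, and $n=r+s-k$. Because we assumed $r\ge k$, we have $s\le r+s-k$, hence $\mu_i/n=\tfrac{s}{r+s-k}\in(0,1]$. Since each $u_i(\mathbf o)$ is a non-negative integer, the condition $u_i(\mathbf o)\ge \mu_i/n$ is equivalent to $u_i(\mathbf o)\ge 1$ for each $i\in N$. Therefore an outcome satisfying the proposition's proportionality-style guarantee exists if and only if an outcome with $\egal(\mathbf o)\ge 1$ exists, which by Theorem~\ref{thm:npcomplete} occurs if and only if $G$ admits a vertex cover of size at most $k$.

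The only subtlety to watch out for is the WLOG assumption $r\ge k$ — I would justify it explicitly so that $\mu_i/n\le 1$, and hence the real-valued bound truly reduces to the integer threshold $1$ used by Theorem~\ref{thm:npcomplete}. Once this is pointed out, the correctness of the reduction (both directions) follows verbatim from the argument given for Theorem~\ref{thm:npcomplete}, so there is no significant new obstacle beyond this bookkeeping step; the proof can essentially be presented as a one-paragraph corollary of that earlier reduction.
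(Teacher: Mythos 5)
Your proposal is correct and follows essentially the same route as the paper: both reuse the Vertex Cover reduction from Theorem~\ref{thm:npcomplete} verbatim, note that $\mu_i=\ell$ for every agent in the UP instances it produces, and observe that the real-valued threshold $\mu_i/n$ then collapses to the integer threshold $1$. The only (immaterial) difference is in the preprocessing: the paper pads the graph with a disjoint clique to force $|E|>|V|$ and hence $\mu_i/n<1$ strictly, whereas you assume $|E|\ge k$ without loss of generality, which already gives $\mu_i/n\le 1$ and suffices because the $u_i(\mathbf o)$ are integers.
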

\begin{proof}
    The proof of Theorem~\ref{thm:npcomplete} proceeds by a reduction from {\sc Vertex Cover}. We can assume without loss of generality that the input instance $(G=(V, E), k)$ of {\sc Vertex Cover} in that reduction satisfies
    $|V|\ge 3$ (as otherwise the problem is trivial)
    and that $|E|>|V|$; if the latter condition does not hold, we can modify our instance by adding a $2|V|$-vertex clique that is disjoint from the rest of the graph and increasing $k$ by $2|V|-1$ (as we need $2|V|-1$ vertices to cover the clique), so that the modified graph has $3|V|$ vertices and 
    at least $2|V|(2|V|-1)/2\ge 5|V|>3|V|$ edges.

    The reduction produces an instance of \egaldec{} with $n=|E|+|V|-k$ agents and $\ell = |V|$ timesteps, where all agents have unique preferences, so that $\mu_i=\ell$ for each $i\in N$. As $|E|>|V|\ge k$,  we have $n>\ell$ and hence $0<\mu_i<n$ for each $i\in N$. Consequently, 
    an outcome $\mathbf o$ satisfies $u_i({\mathbf o})\ge \mu_i/n$ for each $i\in N$ if and only if $u_i({\mathbf o})\ge 1$ for each $i\in N$, and the proof of 
    Theorem~\ref{thm:npcomplete} shows that the latter condition is satisfied if and only if we start with a yes-instance of {\sc Vertex Cover}.
\end{proof}

In contrast, our definition of proportionality can be satisfied
by a simple greedy algorithm. This follows from a similar result 
obtained by \citet{conitzer2017fairpublic} 
in the setting of public decision-making; we provide a proof of the following theorem 
in the appendix.

\begin{theorem} \label{thm:prop}
     A \emph{PROP} outcome always exists and can be computed by a polynomial-time greedy algorithm. 
\end{theorem}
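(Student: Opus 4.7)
The plan is to exhibit a sequential greedy algorithm, show it runs in polynomial time, and prove it produces a PROP outcome via a Hall-type invariant. For each agent $i$, let $q_i=\lfloor\mu_i/n\rfloor$ denote her proportional quota and $T_i=\{t\in[\ell]:S_{it}\neq\varnothing\}$, so that $|T_i|=\mu_i$. First I would observe that a PROP outcome exists whenever there is an assignment $\sigma:[\ell]\to N\cup\{\bot\}$ with $\sigma(t)=i\Rightarrow t\in T_i$ and $|\sigma^{-1}(i)|\ge q_i$ for every $i$, since choosing any $o_t\in S_{\sigma(t),t}$ (and an arbitrary project when $\sigma(t)=\bot$) produces an outcome in which agent $i$ is satisfied at least $q_i$ times.

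Existence of $\sigma$ follows from the deficient version of Hall's theorem once we verify that for every $A\subseteq N$,
\[
\sum_{i\in A}q_i\le\frac{1}{n}\sum_{i\in A}\mu_i=\frac{1}{n}\sum_{t=1}^{\ell}|\{i\in A:t\in T_i\}|\le\frac{|A|}{n}\bigl|\textstyle\bigcup_{i\in A}T_i\bigr|\le \bigl|\textstyle\bigcup_{i\in A}T_i\bigr|,
\]
where the last inequality uses $|A|\le n$. Hence a valid $\sigma$ always exists.

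To realize $\sigma$ greedily, I would process timesteps $t=1,\dots,\ell$ in order, maintaining a deficit $d_i$ (initialized to $q_i$) and the number of remaining opportunities $r_i=|T_i\cap\{t,\dots,\ell\}|$ for each agent. At each step, if $U_t=\{i:d_i>0,\ S_{it}\neq\varnothing\}$ is non-empty, select $i^*\in U_t$ minimizing $r_{i^*}$ (tie-break by smallest index), output any $o_t\in S_{i^*t}$, and decrement $d_i$ for every $i$ with $o_t\in S_{it}$; otherwise output an arbitrary project. Each step touches $O(nm)$ information, so the algorithm is polynomial.

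The main obstacle is proving that the greedy never strands an agent. I would show, by induction on $t$, the \emph{residual-Hall} invariant: $\sum_{i\in A}d_i\le|\bigcup_{i\in A}(T_i\cap\{t+1,\dots,\ell\})|$ for every $A\subseteq N$. The routine case is when the satisfied set at $t$ meets every tight $A$; the delicate case is when some tight $A$ contains an agent $j$ with $t\in T_j$ but $o_t\notin S_{jt}$. The smallest-$r$ selection rule is designed precisely to preclude this: if $j$ were strictly tighter (smaller $r_j$) than $i^*$, the greedy would have picked $j$ instead, and if $r_j\ge r_{i^*}$ then $i^*$ lies in the minimal tight set through $j$, guaranteeing $A$ meets the satisfied set and the invariant survives. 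This structural claim about minimal tight sets is the crux of the proof and mirrors the argument used by Conitzer et al.\ in the closely related public-decision-making setting cited in the paper.
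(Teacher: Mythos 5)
Your existence argument is sound: the defect Hall condition $\sum_{i\in A}\lfloor\mu_i/n\rfloor\le\bigl|\bigcup_{i\in A}T_i\bigr|$ is verified correctly, and the factor $|A|/n\le 1$ in your chain of inequalities is exactly the right source of slack. This is a genuinely different (and heavier) route than the paper's: the paper processes \emph{agents} in increasing order of $\mu_i$, lets agent $i$ grab any $\lfloor\mu_i/n\rfloor$ still-unclaimed timesteps from $T_i$, and closes the argument with the one-line count $\sum_{j<i}\lfloor\mu_j/n\rfloor\le(n-1)\mu_i/n\le\mu_i-\lfloor\mu_i/n\rfloor$. The same $1/n$ slack that makes your Hall condition hold with room to spare means no matching theory, no tie-breaking rule, and no invariant maintenance are needed; any order of claiming works as long as agents with smaller $\mu_i$ go first.

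The gap is in the algorithmic half. The theorem asserts a polynomial-time greedy algorithm, and the correctness of your timestep-by-timestep greedy rests entirely on the inductive step you explicitly defer: the claim that whenever a tight set $A$ through $j$ exists and $r_j\ge r_{i^*}$, the selected agent $i^*$ must belong to the minimal tight set through $j$. You do not prove this, and it is not evident: $i^*$ can be available at timestep $t$ with small $r_{i^*}$ while having future availability disjoint from everyone in $A$, in which case nothing forces $i^*$ into any tight set containing $j$. Ruling such configurations out requires tracking how the initial $|A|/n$ slack erodes over the run, which is precisely the work the proof omits. (Earliest-deadline-first style exchange arguments do not transfer automatically from interval availabilities to arbitrary sets $T_i$.) The result is salvageable without this claim---your Hall condition plus a bipartite $b$-matching or max-flow computation gives a complete polynomial-time proof, or you can switch to the paper's agent-ordered greedy---but as written the correctness of the stated algorithm is asserted rather than established.
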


We note that PROP can be seen as a specialization of the Strong PJR axiom for temporal voting
\cite{chandak23,elkind2024verifying} to voter groups of size~$1$; this offers additional justification
for our definition. Hence, the existence
of PROP outcomes (and polynomial-time algorithms for computing them) also follows from Theorem~4.1 in the work of \citet{chandak23}. In the appendix, we discuss this connection in more detail.

As finding {\em some} PROP outcome is not hard, one may wish to select the ``best'' PROP outcome. 
A natural criterion would be to pick a PROP outcome with the maximum utilitarian or egalitarian welfare. In particular, it would be useful to have an algorithm that can determine if there exists a PROP outcome that is also \util{} or \egal{}. 

However, the proof of Proposition~\ref{prop:prop} implies that selecting a PROP outcome with maximum egalitarian welfare is computationally intractable.
Our next result shows that combining proportionality with utilitarian welfare is hard, too, 
even though both finding a PROP outcome and finding a \util{} outcome is easy.
It also implies that finding a utilitarian welfare-maximizing outcome among all PROP outcomes
is NP-hard.

\begin{theorem} \label{thm:nphard_prop_util}
     Determining if there exists a \emph{PROP} outcome that is \util{} is \emph{NP}-complete, even in the \emph{CP} setting.
\end{theorem}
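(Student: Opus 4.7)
The plan is to prove \emph{NP}-completeness by a reduction from \textsc{3-SAT}. Membership in NP is immediate: given a candidate outcome $\mathbf{o}$, one computes $\util(\mathbf{o})$ and compares it to the polynomial-time computable maximum $\sum_{t\in[\ell]}\max_{p\in P} n_{pt}$; PROP is then verified by computing $u_i(\mathbf{o})$ and $\mu_i$ for each $i\in N$.

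For hardness, let $F$ be a \textsc{3-SAT} formula with variables $x_1,\dots,x_\alpha$ and clauses $M_1,\dots,M_\kappa$; write $a_j$ (resp.\ $b_j$) for the number of clauses containing $x_j$ (resp.\ $\neg x_j$). I construct an instance with three kinds of agents and two kinds of timesteps: $\kappa$ \emph{clause agents} $c_1,\dots,c_\kappa$, $B=\sum_{j}|a_j-b_j|$ \emph{balance agents} (with $|a_j-b_j|$ attached to variable $x_j$), and $D$ \emph{dummy agents}; plus $\alpha$ \emph{variable timesteps} $T_1,\dots,T_\alpha$ and $L$ \emph{dummy timesteps}. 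I pick $D,L\ge 2$ so that $n=\kappa+B+D$ equals $\ell=\alpha+L$ (always possible by absorbing the slack on whichever side is smaller), which makes the PROP threshold $\lfloor\ell/n\rfloor=1$ in the CP setting.

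At each $T_j$ I introduce two special projects $p_j^T,p_j^F$ together with agent-specific private fillers. Clause agent $c_i$ approves $\{p_j^T\}$ if $x_j\in M_i$, $\{p_j^F\}$ if $\neg x_j\in M_i$, and a private singleton $\{\rho_{i,j}\}$ otherwise. Each balance agent attached to $T_j$ approves only the \emph{minority} side of $T_j$; at every other timestep that balance agent, together with every dummy agent, approves both $p^T$ and $p^F$. Each dummy timestep $T_d$ is symmetric, introducing its own tied pair $p_d^T,p_d^F$ approved by all dummies and balance agents, while clause agents approve private fillers. A direct count shows that $p_j^T$ and $p_j^F$ each receive exactly $a_j+B+D$ approvals at $T_j$ (the balance contribution exactly cancels the literal-count asymmetry), strictly beating every other project present (which receives at most $1$); similarly $p_d^T,p_d^F$ are uniquely tied with $B+D$ approvals at $T_d$. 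Hence every UTIL outcome corresponds to a binary choice at each timestep.

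Reading a UTIL outcome as a truth assignment $\sigma$ to $F$ (set $x_j$ to True iff $p_j^T$ is selected at $T_j$, the dummy-timestep choices being irrelevant), clause agent $c_i$ obtains utility equal to the number of literals of $M_i$ satisfied by $\sigma$, while dummy and balance agents obtain utility $\ell$ or $\ell-1$ and trivially satisfy PROP. A UTIL outcome is therefore PROP iff every clause is satisfied, iff $F\in\textsc{3-SAT}$. The main technical point I anticipate is the careful bookkeeping of approval counts to confirm that $p^T,p^F$ remain the unique tied maxima at every timestep under all three agent classes, and that the padding parameters $D$ and $L$ can always be chosen so that $n=\ell$ and $D,L\ge 2$; once these are settled, the correspondence between satisfying assignments and PROP UTIL outcomes is immediate.
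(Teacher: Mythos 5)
Your proposal is correct in substance but takes a genuinely different route from the paper. The paper does not go back to 3-SAT: it reduces from \egaldec{} with $\lambda=1$ (already shown NP-complete in Theorem~\ref{thm:npcomplete}), padding a given CP instance with $n$ extra agents and $n-\ell$ extra timesteps whose approval sets are chosen via the thresholds $n_{pt}\le 2n-i$ so that \emph{every} original project is tied with exactly $n$ approvals at each original timestep; \util{} outcomes then range freely over the original projects, and PROP (threshold $1$) becomes exactly the ``everyone gets positive utility'' condition. Your construction achieves the same structural effect --- making the set of \util{} outcomes a product of binary choices, with PROP encoding satisfiability --- but directly from 3-SAT, using balance agents to cancel the $a_j$ versus $b_j$ asymmetry. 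Both work; the paper's version is more modular (it reuses the earlier hardness result and needs no tie-engineering per literal), while yours is self-contained and makes the ``ties drive the hardness'' intuition explicit. Two small points to fix in a write-up: the tied approval count at $T_j$ is $\min(a_j,b_j)+\bigl(B-|a_j-b_j|\bigr)+D$, not $a_j+B+D$, since the balance agents attached to $T_j$ approve only the minority project there and only $B-|a_j-b_j|$ balance agents approve both --- the tie and the strict dominance over the singleton fillers still hold, so nothing breaks; and you should state the standard normalization that no clause contains a variable together with its negation, since otherwise a clause agent's approval set at $T_j$ would be ill-defined.
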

\begin{proof}
    To see that this problem is is NP, recall that we can compute the maximum utilitarian welfare 
    for a given instance; thus, we can check if a given outcome is \util{} and PROP.

    By Theorem~\ref{thm:npcomplete}, in the CP setting it is NP-complete even to determine if there is an outcome $\mathbf{o}$ such that $u_i(\mathbf{o})\geq 1$ for all agents $i \in N$. 
    Given an instance ${\mathcal I}=(N, P, \ell, ({\mathbf S}_i)_{i\in N})$ with complete preferences, 
    we construct a new instance $\mathcal{I}'$ with complete preferences such that there is an outcome for $\mathcal{I}'$ that is proportional and maximizes utilitarian welfare if and only if there is an outcome for the original instance $\mathcal{I}$ that offers positive utility to all agents. 

    Note that, in the CP setting, if $\ell \geq n$, then there always exists an outcome $\mathbf{o}$ such that $u_i(\mathbf{o})\geq 1$ for all $i \in N$. Hence, assume that $\ell < n$. We construct an instance $\mathcal{I}'$ with a set of agents $N'=[2n]$, $\ell'=2n$ timesteps, and a set of projects $P'=P\cup\{q_1, \dots, q_{n+1}\}$.  
     For each $i=n+1, \dots, 2n$ and each $t\in [\ell]$
    let $P'_{it} = \{p \in P : n_{pt} \leq 2n-i\}.$ We define the approval sets $S'_{it}$ for $\mathcal{I}'$ as follows:
    \begin{equation*}
S'_{it} =
    \begin{cases}
        S_{it} & \text{if  } i\leq n \text{ and } t\leq \ell\\
        \{q_{i  - n}\} \cup P'_{it}& \text{if  } n< i\le 2n \text{ and } t\leq \ell\\
        \{q_{i}\} & \text{if  } i\leq n \text{ and } \ell <t\le 2n\\
        \{q_{n+1}\}& \text{if  } n< i\le 2n \text{ and } \ell<t\le 2n\\
                
    \end{cases}
\end{equation*}
Since all agents in $\mathcal I$ have CP preferences, this is also the case for ${\mathcal I}'$. Further, consider an arbitrary project $p\in P$. At timestep $t\in [\ell]$ this project is approved by $n_{pt}$ agents in $N$ as well as by each agent $i\in N'\setminus N$ such that $n<i\le 2n-n_{pt}$, i.e., by $n-n_{pt}$ additional agents. Thus, in total, at timestep $t\in [\ell]$ each project $p\in P$ receives $n$ approvals, whereas each project in $P'\setminus P$ receives at most one approval. On the other hand, in each of the last $2n-\ell$ timesteps $q_{n+1}$ receives $n$ approvals, whereas every other project receives at most one approval.
It follows that an outcome ${\mathbf o}'$ for $\mathcal{I}'$ maximizes \util{} if and only if $o'_t\in P$ for each $t\in [\ell]$
and $o'_t=q_{n+1}$ for each $t>\ell$.
Moreover, since in ${\mathcal I}'$ the agents have complete preferences, ${\mathbf o}'$ is proportional if and only if each agent's utility is at least $\lfloor\frac{\ell'}{2n}\rfloor=1$.

We now claim that there is an outcome ${\mathbf o}'$ that is proportional and maximizes \util{} for $\mathcal{I}'$ if and only if there is an outcome $\mathbf{o}$ for $\mathcal{I}$ such that $u_i(\mathbf{o})\geq 1$ for all agents $i \in N$. 

For the `if' direction, suppose there is an outcome $\mathbf{o}$ for $\mathcal{I}$ such that $u_i(\mathbf{o})\geq 1$ for all agents $i \in N$. We construct outcome $\mathbf{o}'$ by 
setting $o'_t=o_t$ for  $t\in [\ell]$ and $o'_t = q_{n+1}$ for $t=\ell+1, \dots, 2n$. Our characterization of the \util{} outcomes implies that ${\mathbf o}'$ maximizes the utilitarian welfare. Moreover, 
for each agent $i\in [n]$ we have $u_i(\mathbf{o}')\geq 1$, as ${\mathbf o}'$ coincides with $\mathbf{o}$ for the first $\ell$ timesteps. For agents $i=n+1, \dots,  2n$, we have $u_i(\mathbf{o}')\geq \ell$, as these agents approve the project $q_{n+1}$ for the last $2n-\ell\ge \ell$ timesteps. Hence, $\mathbf{o}'$ achieves proportionality and maximizes \util{}.

For the `only if' direction, suppose there exists an outcome $\mathbf{o}'$ for $\mathcal{I}'$ that achieves proportionality and maximizes \util{}. As $\mathbf{o}'$ achieves proportionality,  we have $u_i(\mathbf{o}')\geq 1$ for all $i\in N$. Furthermore, as $\mathbf{o}'$ maximizes \util{}, we have $o'_t = q_{n+1}$ for $t>\ell$, so no agent in $N$ derives positive utility from the last $2n-\ell$ timesteps.
 Hence, each agent $i \in N$ derives positive utility from one of the first $\ell$ timesteps. We construct the outcome $\mathbf{o}$ by setting $o_t=o'_t$ for $t\in [\ell]$. It then holds that 
  $u_i(\mathbf{o})\geq 1$ for all agents $i \in N$. 
\end{proof}

We have established that simultaneously achieving optimal welfare and proportionality is computationally hard. 
A natural challenge, then, is to quantify the impact of proportionality on welfare. To address this challenge, we use the 
concepts of the \emph{price of fairness} and the \emph{strong price of fairness} \cite{bei2021price}, which have been formulated for several proportionality guarantees in the single-round multiwinner voting literature \cite{brill2024priceable,elkind2022price,lackner2020util}.
As we focus on PROP in this paper, we instantiate the definition of the (strong) price of fairness 
accordingly, as follows.

Given a problem instance $\mathcal{I}$, let $\Pi_{\text{PROP}}(\mathcal{I})\subseteq \Pi({\mathcal I})$ denote the set of all proportional outcomes for $\mathcal{I}$. 
Furthermore, given a welfare objective $W\in\{\egal{},\util{}\}$, let $W$-OPT($\mathcal{I}$) denote the maximum $W$-welfare over all outcomes in $\Pi({\mathcal I})$.

\begin{definition}[Price of Proportionality]
    For a welfare objective $W$, the \emph{price of proportionality (PoPROP$_W$) for an instance $\mathcal I$} is the ratio between the maximum $W$-welfare of an outcome 
    for $\mathcal I$ and the maximum $W$-welfare of an outcome for $\mathcal I$ that satisfies {\em PROP}; the \emph{price of proportionality for $W$ (PoPROP$_W$)} is the supremum of 
    $\text{\em PoPROP}_W({\mathcal I})$ over all instances $\mathcal I$: 
    \begin{align*}
        \text{\emph{PoPROP}}_W(\mathcal{I}) &= \frac{W\text{\em -OPT}(\mathcal{I})}{\max_{\mathbf{o} \in \Pi_{\text{\em PROP}}(\mathcal{I})} W(\mathbf{o})}; \\
        \text{\emph{PoPROP}}_W &= \sup_{\mathcal{I}}\text{\emph{PoPROP}}_W(\mathcal{I}).
    \end{align*}
\end{definition}

\begin{definition}[Strong Price of Proportionality]
    For a welfare objective $W$, the \emph{strong price of proportionality (s-PoPROP$_W$) for an instance $\mathcal I$} is 
    the ratio between the maximum $W$-welfare of an outcome 
    for $\mathcal I$ and the minimum $W$-welfare of an outcome for $\mathcal I$ that satisfies {\em PROP}; the \emph{strong price of proportionality for $W$ (s-PoPROP$_W$)} is 
    the supremum of $\text{\em s-PoPROP}({\mathcal I})$ over all instances $\mathcal I$:
    \begin{align*}
        \text{\emph{s-PoPROP}}_W({\mathcal I}) &= \frac{W\text{\em -OPT}(\mathcal{I})}{\min_{\mathbf{o} \in \Pi_{\text{\em PROP}}(\mathcal{I})} W(\mathbf{o})}; \\
        \text{\emph{s-PoPROP}}_W &= \sup_{\mathcal{I}}\text{\emph{s-PoPROP}}_W({\mathcal I}).
    \end{align*}
\end{definition}

We first observe that requiring proportionality has no impact on egalitarian welfare:
any outcome $\mathbf o$ can be transformed (in polynomial time) 
into a proportional outcome ${\mathbf o}'$ 
so that the egalitarian welfare of ${\mathbf o}'$ is at least as high as that of $\mathbf o$.

\begin{proposition} \label{prop:prop_egal}
    Given an outcome $\mathbf o$, we can construct in polynomial time 
    another outcome ${\mathbf o}'$ such that ${\mathbf o}'$ is proportional and  $\egal({\mathbf o}')\ge \egal({\mathbf o})$.
\end{proposition}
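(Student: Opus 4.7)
The plan is to modify $\mathbf{o}$ via a sequence of local swaps to repair all proportionality violations while never dropping any agent's utility below $\eta := \egal(\mathbf{o})$. Setting $\tau_i := \max\{\eta,\ \lfloor \mu_i/n \rfloor\}$ for each $i\in N$, the goal is an outcome $\mathbf{o}'$ with $u_i(\mathbf{o}') \geq \tau_i$ for every agent.

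I would initialize $\mathbf{o}' \leftarrow \mathbf{o}$, which already guarantees $u_i(\mathbf{o}')\geq \eta$ for all $i$. Then, as long as some agent $i$ is PROP-deficient (i.e., $u_i(\mathbf{o}') < \lfloor \mu_i/n \rfloor$), I would search for a \emph{boosting move}: a timestep $t$ together with a candidate project $p\in S_{it}$ such that reassigning $o'_t\mapsto p$ strictly increases $u_i(\mathbf{o}')$ without reducing any tight agent $j$ (one with $u_j(\mathbf{o}')=\eta$) below $\eta$. Note that a PROP-deficit can only arise when $\lfloor \mu_i/n \rfloor > \eta$, so the deficient agents are exactly those whose proportional entitlement strictly exceeds the egalitarian baseline, and the number of unsatisfied-but-approved timesteps for such an agent is at least $\mu_i - \lfloor \mu_i/n\rfloor \geq (1-1/n)\mu_i$, giving substantial room to maneuver.

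If no direct boosting move is available, I would extend it into an \emph{augmenting chain}: whenever reassigning $o'_t$ would pull some tight agent $j$ below $\eta$, invoke the fact that $u_j(\mathbf{o})\geq \eta$ to locate a timestep $t'$ at which $j$ is satisfied in $\mathbf{o}$ but not in $\mathbf{o}'$, and perform a coordinated swap at $t'$ to restore $j$'s utility. The witness $\mathbf{o}$ thus acts as a reservoir from which tight agents can always be reimbursed.

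The main obstacle is proving that such an augmenting chain always exists and closes, so that strict progress is made whenever $\mathbf{o}'$ is not yet PROP. I would frame this as a Hall/flow-type condition on a bipartite graph linking deficient agents to their unsatisfied-but-approved timesteps, and tight agents to the $\mathbf{o}$-satisfied timesteps on which they can rely. The existence of both the witness $\mathbf{o}$ (certifying a fractional feasible flow of value $\eta$ for each agent) and the greedy PROP outcome from Theorem~\ref{thm:prop} (certifying a fractional feasible flow of value $\lfloor \mu_i/n\rfloor$ for each agent) yields a superposition that is fractionally feasible for the joint constraints $u_i\geq \tau_i$, which can be integrally realised by augmenting-path rounding. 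Polynomial runtime follows from the potential
\[
\Phi(\mathbf{o}') := \sum_{i\in N}\max\bigl\{0,\ \lfloor \mu_i/n \rfloor - u_i(\mathbf{o}')\bigr\},
\]
which strictly decreases with each augmentation and is bounded above by $n\ell$.
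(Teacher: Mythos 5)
There is a genuine gap: the existence of an outcome meeting all thresholds $\tau_i = \max\{\eta, \lfloor \mu_i/n\rfloor\}$ simultaneously is exactly what needs to be proven, and your argument for it does not work. A convex combination of the fractional encoding of $\mathbf{o}$ (which certifies $u_i \ge \eta$) and of the greedy PROP outcome (which certifies $u_i \ge \lfloor \mu_i/n\rfloor$) yields a point whose guarantee for agent $i$ is the corresponding \emph{weighted average} of $\eta$ and $\lfloor\mu_i/n\rfloor$, not their componentwise maximum; so the ``superposition'' is not even fractionally feasible for the constraints $u_i\ge\tau_i$. Moreover, even if fractional feasibility were established, the step ``integrally realised by augmenting-path rounding'' is unsupported: selecting one project per timestep so as to cover each agent $\tau_i$ times is a covering problem, not a flow or bipartite-matching problem (one selection can cover many agents at once), its constraint matrix is not totally unimodular, and the NP-hardness of \egaldec{} (Theorem~\ref{thm:npcomplete}) already shows that such relaxations have integrality gaps in general. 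Consequently the claim that an augmenting chain always exists and closes --- which you yourself identify as the main obstacle --- is left unproven, and the potential-function argument for termination has nothing to rest on.

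The paper closes this gap with a direct combinatorial construction rather than local search. Sort the agents by $\mu_i$ and let $q$ be the first PROP-violator; the key observation is that $\eta \le u_q(\mathbf{o}) < \lfloor\mu_q/n\rfloor$, so $\tau_i = \lfloor\mu_i/n\rfloor$ for \emph{every} agent $i\ge q$, and these agents can simply be handled by the greedy procedure of Theorem~\ref{thm:prop} from scratch (their guarantee automatically exceeds $\eta$). Agents $i<q$ each mark $\max\{\eta,\lfloor\mu_i/n\rfloor\} \le \lfloor\mu_q/n\rfloor$ timesteps of $\mathbf{o}$ that they approve and keep those selections. The counting bound $(i-1)\lfloor\mu_i/n\rfloor \le \mu_i - \lfloor\mu_i/n\rfloor$ then shows every agent $i\ge q$ still finds enough unmarked approved timesteps. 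If you want to salvage your approach, you would need to prove the Hall-type condition for your augmentation graph directly, and the natural way to do so is essentially to rediscover this ordering-and-counting argument.
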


\begin{proof}
    Consider an outcome $\mathbf{o}$. If $\mathbf o$ satisfies PROP, we are done, so assume that this is not the case. Let $\lambda=\egal({\mathbf o})$.
    Reorder the agents in $N$ so that $\mu_i\le \mu_{i'}$ for 
    $1\le i<i'\le n$. Since $\mathbf o$ fails PROP, there exists an $i\in N$ such that 
    $u_i(\mathbf{o})< \lfloor\frac{\mu_i}{n}\rfloor$; 
    let $q$ be the smallest value of $i$ for which this is the case. 

    We will construct an outcome ${\mathbf o}'$ that is proportional and whose egalitarian welfare is at least as high as that of $\mathbf o$, i.e., ${\mathbf o}'$ satisfies
    $u_i({\mathbf o}')\ge \max\{\lambda, \lfloor\frac{\mu_i}{n}\rfloor\}$ for all $i\in N$.
    To this end, we proceed in two stages.

    Note first that for each agent $i\in [q-1]$ we have $u_i({\mathbf o})\ge \lfloor\frac{\mu_i}{n}\rfloor$ (by our choice of $q$) and $u_i({\mathbf o})\ge \lambda$ 
    (by the definition of $\lambda$). Thus, for each $i\in [q-1]$ there are
    at least $\max\{\lambda, \lfloor\frac{\mu_i}{n}\rfloor\}$ timesteps $t\in [\ell]$ 
    in which $i$ approves $o_t$. During the first stage, we ask each agent $i\in [q-1]$
    to mark $\max\{\lambda, \lfloor\frac{\mu_i}{n}\rfloor\}$ timesteps $t\in [\ell]$
    with $o_t\in S_{it}$; multiple agents are allowed to mark the same timestep.
    We have $\lambda \le u_q({\mathbf o}) < \lfloor\frac{\mu_q}{n}\rfloor$ 
    (by our choice of $\lambda$ and $q$)
    and $\lfloor\frac{\mu_i}{n}\rfloor\le \lfloor\frac{\mu_q}{n}\rfloor$ for $i<q$, so
    the total number of timesteps marked during the first stage does not exceed 
    $(q-1)\lfloor\frac{\mu_q}{n}\rfloor$.
    Then, for each marked timestep $t$ we set $o'_t=o_t$.

    During the second stage, we proceed greedily, just as in the proof of Theorem~\ref{thm:prop}. 
    That is, for each $i=q, \dots, n$ we ask agent $i$ to mark $\lfloor\frac{\mu_i}{n}\rfloor$ 
    previously unmarked timesteps $t$
    with $S_{it}\neq\varnothing$; for each such timestep $t$ 
    we pick a project $p\in S_{it}$ and set $o'_t=p$.
    To see why each agent $i$ with $i\ge q$ can find $\lfloor\frac{\mu_i}{n}\rfloor$ suitable timesteps, 
    note that her predecessors have marked at most
    $$
    (q-1)\left\lfloor\frac{\mu_q}{n}\right\rfloor + \sum_{i'=q}^{i-1} \left\lfloor\frac{\mu_{i'}}{n}\right\rfloor \le (i-1)\left\lfloor\frac{\mu_i}{n}\right\rfloor \le \mu_i - \left\lfloor\frac{\mu_i}{n}\right\rfloor
    $$
    timesteps, so at least $\lfloor\frac{\mu_i}{n}\rfloor$ of the $\mu_i$ timesteps 
    in which agent $i$ approves some projects remain available to agent $i$.

    By construction, for each $i<q$ we have 
    $u_i({\mathbf o}')\ge \max\{\lambda, \lfloor\frac{\mu_i}{n}\rfloor\}$, and for $i\ge q$ 
    we have $u_i({\mathbf o}')\ge \lfloor\frac{\mu_i}{n}\rfloor\ge \lfloor\frac{\mu_q}{n}\rfloor> u_q({\mathbf o})\ge \lambda$.
    Hence, ${\mathbf o}'$ satisfies PROP and guarantees utility at least $\lambda$ to all agents.
\end{proof}
By applying Proposition~\ref{prop:prop_egal} to an outcome $\mathbf o$
that maximizes the egalitarian welfare, we obtain the following corollary.
\begin{corollary} \label{cor:prop_egal}
    $\text{\emph{PoPROP}}_\egal{}=1$.
\end{corollary}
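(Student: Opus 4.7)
The plan is to derive the corollary directly from Proposition~\ref{prop:prop_egal}, treating it essentially as a one-line consequence. Fix an arbitrary instance $\mathcal{I}$. Let $\mathbf{o}^*$ be an outcome achieving $\egal(\mathbf{o}^*) = \egal\text{-OPT}(\mathcal{I})$ (such an outcome exists since there are finitely many outcomes). Apply Proposition~\ref{prop:prop_egal} to $\mathbf{o}^*$ to obtain a PROP outcome $\mathbf{o}'$ with $\egal(\mathbf{o}') \geq \egal(\mathbf{o}^*)$.

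Since $\mathbf{o}' \in \Pi_{\text{PROP}}(\mathcal{I})$, we get
\[
\max_{\mathbf{o} \in \Pi_{\text{PROP}}(\mathcal{I})} \egal(\mathbf{o}) \geq \egal(\mathbf{o}') \geq \egal(\mathbf{o}^*) = \egal\text{-OPT}(\mathcal{I}).
\]
The reverse inequality is immediate, since $\Pi_{\text{PROP}}(\mathcal{I}) \subseteq \Pi(\mathcal{I})$. Hence the two quantities are equal, which yields $\text{PoPROP}_{\egal}(\mathcal{I}) = 1$. Taking the supremum over all $\mathcal{I}$ gives $\text{PoPROP}_{\egal} = 1$.

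There is no real obstacle here: the heavy lifting is entirely in Proposition~\ref{prop:prop_egal}, and the corollary is just the observation that applying the proposition to a welfare-maximizing outcome shows that the maximum egalitarian welfare among PROP outcomes coincides with the overall maximum. The only thing to be careful about is well-definedness of the ratio, but since $\Pi_{\text{PROP}}(\mathcal{I})$ is always nonempty by Theorem~\ref{thm:prop}, the denominator is well defined (and, in fact, we have just shown it equals the numerator).
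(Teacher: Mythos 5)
Your proof is correct and follows exactly the paper's route: the paper also obtains the corollary by applying Proposition~\ref{prop:prop_egal} to an \egal{}-optimal outcome, with the reverse inequality being immediate from $\Pi_{\text{PROP}}(\mathcal{I}) \subseteq \Pi(\mathcal{I})$. Your additional remark on non-emptiness of $\Pi_{\text{PROP}}(\mathcal{I})$ via Theorem~\ref{thm:prop} is a sensible (if implicit in the paper) point of care.
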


In contrast, for utilitarian welfare, the price of proportionality scales as $\sqrt{n}$, 
even in the CP setting.

\begin{theorem}
    In the \emph{CP} setting, $\text{\emph{PoPROP}}_\util = \frac{n}{2 \sqrt {n} - 1}$ and hence $\Theta(\sqrt{n})$. The lower bound applies even in the {\em UP} setting.
\end{theorem}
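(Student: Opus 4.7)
I would sandwich $\text{PoPROP}_\util$ at $n/(2\sqrt n - 1)$ by proving matching lower and upper bounds: an explicit UP construction for the lower bound, and a careful modification of a UTIL-maximizing outcome paired with a structural bound on the maximum utilitarian welfare for the upper bound.

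\emph{Lower bound.} Assume $n = s^2$ for an integer $s$ (general $n$ follows by floor/ceiling adjustments). Take $\ell = n$ timesteps, and designate $s$ \emph{popular} agents, each uniquely approving a shared project $p_0$ at every timestep, and $n - s$ \emph{minority} agents, each uniquely approving a distinct personal project $q_i$ at every timestep. This is a UP instance. The unique UTIL-maximizer selects $p_0$ at every timestep, giving $U = sn$. Since $r := \lfloor \ell/n \rfloor = 1$, any PROP outcome must dedicate at least one timestep per minority agent to their personal project, leaving at most $s$ timesteps for $p_0$. The best PROP outcome uses all $s$ of them, yielding $P = s^2 + (n-s) = 2n - s$ and ratio $ns/(2n - s) = n/(2\sqrt n - 1)$.

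\emph{Upper bound.} Consider a CP instance; let $o^*$ be UTIL-max with value $U$ and $r = \lfloor \ell/n \rfloor$. If $r = 0$ then every outcome is PROP, giving ratio $1$; so assume $r \geq 1$. Let $A = \{i : u_i(o^*) < r\}$ with $k = |A|$, $d_i = r - u_i(o^*)$ for $i \in A$, and $D = \sum_{i \in A} d_i \leq kr$. I would prove two bounds:
\[ U \leq (n - k) \ell + k(r - 1), \qquad P \geq U(1 - D/\ell) + D. \]
The first follows by splitting $U = \sum_i u_i(o^*)$ into well-served contributions ($\leq \ell$ each) and underserved contributions ($\leq r - 1$ each). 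For the second, I would construct a PROP outcome $o^P$ by modifying $o^*$ at a set $T$ of $|T| = D$ timesteps, each reassigned to a project approved by an underserved agent. In CP, each $i \in A$ has at least $\ell - (r-1)$ eligible timesteps (where $o^*_t \notin S_{it}$), so a Hall/defect-type argument yields a feasible matching; averaging over admissible matchings further furnishes one whose $T$ satisfies $\sum_{t \in T} n_{o^*_t, t} \leq (D/\ell) U$. Each modified timestep contributes at least $1$ to $\util(o^P)$, giving the claimed bound on $P$. With both bounds, $U/P$ is linearly increasing in $U$, so substituting the upper bound on $U$, writing $y = n - k$, and doing elementary calculus shows the resulting rational function is maximized at $y = \sqrt n$ with value $n/(2\sqrt n - 1)$, independent of $r$.

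\emph{Main obstacle.} The subtlety is constructing $o^P$ so as to simultaneously (a) respect the matching of underserved agents to fix-timesteps, (b) meet the averaging bound on $\sum_{t \in T} n_{o^*_t, t}$, and (c) avoid dropping any well-served agent below the PROP threshold. Preferring low-$n_{o^*_t, t}$ timesteps for $T$ addresses (b) and (c): few agents lose utility at such timesteps, and the losses stay within each well-served agent's slack $u_j(o^*) - r \geq 0$. Reconciling this preference with the matching constraint of (a) is the crux of the formal argument; the required slack is provided by the fact that in CP each $i \in A$ has a near-full set of eligible timesteps, leaving ample freedom in where to locate the fixes.
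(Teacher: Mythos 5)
Your lower bound is the same construction as the paper's ($\sqrt{n}$ agents sharing one popular project, $n-\sqrt{n}$ agents with distinct personal projects, $\ell=n$), and it is correct. The upper bound, however, takes a different route from the paper and has two genuine gaps.

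First, the two bounds you state are jointly insufficient to conclude $\text{PoPROP}_\util \le \frac{n}{2\sqrt{n}-1}$. Take $n=100$, $\ell=1000$, $r=10$, $k=90$ (so $y=10$). Your first bound allows $U = 10\cdot 1000 + 90\cdot 9 = 10810$, and your constraints allow $D = kr = 900$; plugging these into $P \ge U(1-D/\ell)+D$ gives $P \ge 1981$ and a ratio of $10810/1981 \approx 5.46 > 100/19 \approx 5.26$. So the claimed ``elementary calculus'' step does not close, and the conclusion is certainly not independent of $r$ with only these two inequalities in hand. The fix would require the joint constraint $U + D \le (n-k)\ell + kr$ (since $D = kr - \sum_{i\in A}u_i(o^*)$ and $U = \sum_{i\notin A}u_i(o^*) + \sum_{i\in A}u_i(o^*)$), which you never state or use; whether the optimization then lands exactly at $n/(2\sqrt{n}-1)$ for all $r$ would still need to be verified.

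Second, the construction of $o^P$ is not established. You need a set $T$ of $D$ timesteps that simultaneously (a) admits a system of distinct representatives for the deficits $d_i$, (b) satisfies $\sum_{t\in T} n_{o^*_t,t} \le (D/\ell)\,U$, and (c) never pushes a well-served agent below $r$. For (b), a uniformly random admissible matching does not induce the uniform distribution on timesteps, so the averaging argument does not directly give $(D/\ell)U$. For (c), a well-served agent $j$ with $u_j(o^*)=r$ exactly has zero slack, yet nothing prevents a reassigned timestep $t$ (eligible for some underserved $i$ because $o^*_t\notin S_{it}$) from satisfying $o^*_t\in S_{jt}$; one such reassignment breaks PROP for $j$, and repairing it changes $D$ and hence your bound. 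You flag this as ``the crux'' but offer only a heuristic. For contrast, the paper avoids all of this by building the PROP outcome from scratch rather than repairing the \util{} optimum: for $\ell=n$ it greedily selects the $k$ globally most-approved (project, timestep) pairs (where $k$ is the maximum approval score), gives each still-unsatisfied agent a private timestep from the $n-k$ that remain, bounds OPT by $U + s(n-k)$ where $s$ is the smallest approval count among the first-stage picks, and finishes with AM-GM; the case $\ell>n$ is handled by splitting into blocks of size $n$ and the mediant inequality. That argument sidesteps the matching, averaging, and slack issues entirely.
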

\begin{proof}
    We first establish the lower bound.
    Given an integer $k\ge 2$,
    we construct an instance $\mathcal I$ with 
    $n=k^2$ agents, a set of projects $P = \{p_0, p_1,\dots,p_{n-k}\}$, and $\ell = n$.
    The agents have static preferences: 
    their approval sets for all timesteps $t \in [\ell]$ are defined by
    \begin{equation*}
    S_{it} =
    \begin{cases}
        \{p_i\} & \text{if  } i \leq n - k\\
        \{p_0\} & \text{otherwise. } 
    \end{cases}
\end{equation*}
Note that in this instance agents have unique preferences. The (unique) \util{} outcome chooses project $p_0$ for all timesteps. As $k$ agents approve $p_0$, the utilitarian welfare of this outcome is $nk$. 
However, in order for an outcome to be proportional, each project must be selected at least once. 
Hence, for each ${\mathbf o}\in \Pi_{\text{PROP}}(\mathcal{I})$
we have $\util({\mathbf o})\le n - k + k^2$. 
Then, we obtain
\begin{equation*}
    \text{PoPROP}_\util(\mathcal I) \ge \frac{nk} {n - k + k^2} = \frac{n \cdot \sqrt{n}}{2 n - \sqrt{n}} = \frac{n}{2 \sqrt {n} - 1}.
\end{equation*}

To establish the upper bound, given an instance $\mathcal I$, we explicitly construct an outcome $\mathbf o$ such that $\util({\mathbf o})$ is at least  a $(\frac{2}{\sqrt {n}} - \frac{1}{n})$-fraction of the maximum utilitarian welfare for $\mathcal I$. 

We first focus on the case $\ell=n$; later, we will explain how to extend our analysis to other values of $\ell$. As we consider the CP setting, in this case proportionality means that for every agent $i\in N$ there should be at least one timestep where $i$ approves the selected project. 

Recall that 
$n_{pt}$ denotes the number of agents approving project $p$ at timestep $t$, and let
$k=\max_{(p, t)\in P\times [\ell]}n_{pt}$.
We construct an outcome $\mathbf o$ in two stages.

In the first stage, we start by setting $T_1=[\ell]$. Then, 
for $j=1, \dots, k$ we
pick a pair $(p_j, t_j)\in\arg\max_{(p, t)\in P\times T_j}n_{pt}$, and set $o_{t_j}=p_j$ and $T_{j+1}=T_{j}\setminus \{t_{j}\}$.
This concludes the first stage. 

Note that during the first stage we select projects for $k\ge 1$ timesteps, so $|T_{k+1}|=n-k$ timesteps remain available.
Further, at least $k=n_{p_1t_1}$ agents approve $p_1 = o_{t_1}$ at timestep $t_1$, so there are at most $n-k$ agents who obtain utility $0$ from the partial outcome constructed in the first stage. During the second stage, we allocate each such agent $i$ a distinct timestep $t\in T_{k+1}$, and set $o_t$ to a project in $S_{it}$ (recall that the agents have complete preferences, so $S_{it}\neq\varnothing$); this is feasible as we have at most $n-k$ agents unsatisfied by stage 1, and $n-k$ available timesteps. 

By construction, each agent's utility from $\mathbf o$ is at least $1$, so $\mathbf o$ is proportional. 
Furthermore, for each $j\in [k]$ project $p_j$ is among the projects that receive the highest number of approvals at timestep $t_j$, i.e., our selections during the first stage maximize the utilitarian welfare for timesteps $t_1 \dots, t_k$. 
Let $U=\sum_{j\in [k]}n_{p_jt_j}$ be the utilitarian welfare obtained from these timesteps.

We are now ready to bound $\util({\mathbf o})$ and the maximum utilitarian welfare in our instance.
First, note that $\util({\mathbf o})\ge U+(n-k)$:
we obtain $U$ from the first stage, and each timestep assigned in the second stage contributes at least $1$.
Further, let $s=n_{p_kt_k}$, i.e., $s$ is the number of approvals received by project $p_k$ in timestep $t_k$. Since during the first stage we sequentially selected pairs $(p, t)$ so as to maximize $n_{pt}$, we have $U\ge sk$. Moreover, for each timestep $t\in T_{k+1}$ (i.e., a timestep for which we did not assign a project during the first stage) we have 
$\max_{p\in P}n_{pt}\le s$, so these timesteps can contribute at most $s(n-k)$ to the utilitarian welfare, no matter which projects we select.
Thus, the maximum utilitarian welfare for our instance is at most $U+s(n-k)$. Therefore, we have
$$
\text{PoPROP}_\util({\mathcal I}) \leq \frac{U+s(n-k)}{U + (n-k)}.
$$
Since $U\geq sk$, $s\ge 1$, and the function $\frac{x+A}{x+B}$ is monotonically non-increasing for $A\ge B$, 
we obtain
$$
\text{PoPROP}_\util({\mathcal I}) \leq \frac{sk+s(n-k)}{sk + (n-k)}  = \frac{sn}{k(s-1)+n}.
$$
Further, since $s=n_{p_kt_k}\le n_{p_1t_1}=k$ and $s-1\ge 0$, we have
$$
\text{PoPROP}_\util({\mathcal I}) \leq \frac{sn}{k(s-1)+n}\le \frac{sn}{s(s-1)+n}.
$$
Applying the AM-GM inequality to $s^2$ and $n$, we obtain $s^2+n\ge 2s\sqrt{n}$, so
$$
\text{PoPROP}_\util({\mathcal I}) \le \frac{sn}{s(s-1)+n}\le \frac{sn}{2s\sqrt{n}-s}= \frac{n}{2\sqrt{n}-1}.
$$
Thus, we have established the desired upper bound on $\text{PoPROP}_\util$ for the case $n=\ell$.
Now, if $\ell<n$, then $\frac{\mu_i}{n}<1$ for all $i\in N$, so no outcome
violates the proportionality constraint and therefore $\text{PoPROP}_\util{}=1$ in that case.

Finally, if $\ell>n$, 
let $q=\lfloor \frac{\ell}{n} \rfloor$
and let $r  = \ell-q\cdot \lfloor \frac{\ell}{n} \rfloor$.
We split the $\ell$ timesteps to create $q$ groups 
$T^1, \dots, T^q$ of $n$ timesteps each; if $r>0$, we create an extra group $T^{q+1}$ of size $r$. 
For each $T^j$, $j\in [q]$, we use the construction for the case $\ell=n$
to create a partial outcome ${\mathbf o}^j=(o_t)_{t\in T^j}$.
For timesteps in $T^{q+1}$, we construct ${\mathbf o}^{q+1}$ 
by choosing projects that receive the highest number of approvals at each timestep. We then merge ${\mathbf o}^j$, $j\in [q+1]$, into a single outcome~$\mathbf o$.

As each agent obtains positive utility from each ${\mathbf o}^j$, $j\in [q]$, the utility of each agent is at least $q=\lfloor \frac{\ell}{n} \rfloor$, so PROP is satisfied. Now, for each $j\in [q+1]$ let $\alpha_j$ be the maximum utilitarian welfare achievable for timesteps in $T^j$, and let $\beta_j=\util({\mathbf o}^j)$; if $T^{q+1}=\varnothing$, we set $\alpha_{q+1}=\beta_{q+1}=0$, with the convention that $\frac00=1$. We have argued that $\text{PoPROP}_\util\le \frac{n}{2\sqrt{n}-1}$ 
as long as $\ell=n$, so we have $\frac{\alpha_j}{\beta_j}\le \frac{n}{2\sqrt{n}-1}$ for each $j\in [q]$; also, $\frac{\alpha_{q+1}}{\beta_{q+1}}=1$ by construction of ${\mathbf o}^{q+1}$.
It remains to observe that 
$$
\text{PoPROP}_\util({\mathcal I}) = \frac{\alpha_1+\dots+\alpha_{q+1}}{\beta_1+\dots+\beta_{q+1}}
\le \max_{j\in [q+1]} \frac{\alpha_j}{\beta_j}\le \frac{n}{2\sqrt{n}-1}
$$
and
$$
\frac{n}{2\sqrt{n}}\le \frac{n}{2\sqrt{n}-1}\le \frac{n}{\sqrt{n}}, 
$$
which implies $\text{PoPROP}_\util=\Theta(\sqrt{n})$.
\end{proof}

To obtain bounds on s-PoPROP, 
we first prove a technical lemma.
\begin{lemma}\label{lem:kn}
   For every $k, n\in {\mathbb N}$ such that $k\ge n$ it holds that $\lfloor \frac{k}{n} \rfloor\ge \frac{k}{2n-1}$. 
\end{lemma}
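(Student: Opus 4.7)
The plan is to unpack the floor by division with remainder and reduce the inequality to an elementary comparison. Write $k = qn + r$ where $q = \lfloor k/n \rfloor$ and $0 \le r \le n-1$. Since the hypothesis gives $k \ge n$, we immediately have $q \ge 1$.

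Next I would clear denominators. Multiplying both sides of the target inequality $\lfloor k/n \rfloor \ge k/(2n-1)$ by the positive quantity $2n-1$ (note $n \ge 1$, since if $n = 0$ the statement is vacuous, and we may assume $n \ge 1$), the inequality becomes $q(2n-1) \ge qn + r$, which simplifies further to
\[
q(n-1) \ge r.
\]

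This is the heart of the matter, and it is trivial: since $q \ge 1$ we have $q(n-1) \ge n - 1$, and by definition of the remainder $r \le n - 1$. Chaining these gives $q(n-1) \ge n-1 \ge r$, completing the proof. There is no real obstacle; the only thing worth double-checking is the corner case $n = 1$, where both sides of $q(n-1) \ge r$ equal $0$ (and the original inequality reads $k \ge k$), so the bound holds with equality.
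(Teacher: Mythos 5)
Your proof is correct, and it takes a genuinely different (and cleaner) route than the paper's. You reduce the claim to a single algebraic identity via division with remainder: writing $k = qn + r$ with $q = \lfloor k/n\rfloor$ and $0 \le r \le n-1$, the inequality becomes $q(n-1) \ge r$, which follows immediately from $q \ge 1$ (guaranteed by $k \ge n$) and $r \le n-1$. The paper instead proceeds by a four-way case analysis on the size of $k$ relative to $n$ (the cases $n=1$; $n \le k \le 2n-1$; $2n \le k \le 3n-1$; and $k \ge 3n$, the last using $\lfloor k/n\rfloor \ge k/n - 1$ together with $2n-1 \ge 3n/2$). Your argument is shorter, avoids any case split beyond noting the trivial $n=1$ corner, and makes transparent exactly where the hypothesis $k \ge n$ is used; the paper's version, by contrast, exhibits the tightness of the constant more explicitly (the middle cases show where $\lfloor k/n\rfloor$ and $k/(2n-1)$ come closest), but buys nothing in rigor or generality. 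Both are complete proofs of the same statement.
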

\begin{proof}
    We proceed by case analysis.
    \begin{itemize}
        \item $n=1$. 
        In this case we have 
        $\lfloor \frac{k}{n} \rfloor=k = \frac{k}{2n-1}$, so the statement of the lemma is true.
        \item $n\ge 2$, $n\le k \le 2n-1$.
        In this case we have 
        $\lfloor \frac{k}{n} \rfloor=1$ and $\frac{k}{2n-1}\le 1$, so the statement of the lemma is true as well.
        \item $n\ge 2$, $2n\le k \le 3n-1$.
     We have $3n-1\le 4n-2$, so
        $\frac{k}{2n-1}\le 2=\lfloor \frac{k}{n} \rfloor$, which establishes the statement of the lemma for this case.
        \item $n\ge 2$, $k\ge 3n$.
        Since $n\ge 2$, we have $2n-1\ge \frac{3n}{2}$ and hence $\frac{k}{2n-1}\le \frac{2k}{3n}$. Moreover, 
        $k\ge 3n$ implies $\frac{k}{3n}\ge 1$.
        Thus, we have
        $$
        \left\lfloor \frac{k}{n} \right\rfloor\ge \frac{k}{n}-1\ge \frac{k}{n}-\frac{k}{3n}=\frac{2k}{3n}\ge \frac{k}{2n-1}, 
        $$
        establishing the statement of the lemma for this case as well.
    \end{itemize}
\end{proof}

We are now ready to bound the strong price of proportionality. 
\begin{theorem}\label{thm: strong price}
     We have $\text{\em s-PoPROP}_\util=\text{\em s-PoPROP}_\egal=+\infty$. 
     However, if for all agents $i \in N$ it holds that $\lfloor \frac{\mu_i}{n} \rfloor \geq 1$, then 
     $\text{\em s-PoPROP}_\util=\text{\em s-PoPROP}_\egal=2n-1$.
\end{theorem}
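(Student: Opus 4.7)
My plan splits naturally into the three claims: the unbounded case without the assumption, and the upper and lower bounds of $2n-1$ under the assumption.

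For the unbounded case, a trivial instance suffices: take $n\ge 2$, $\ell=1$, projects $\{p_0,p_1\}$, and $S_{i1}=\{p_0\}$ for all $i$. Then $\mu_i=1<n$, so $\lfloor\mu_i/n\rfloor=0$ and every outcome vacuously satisfies PROP. In particular, $o_1=p_1$ gives $\util(\mathbf o)=\egal(\mathbf o)=0$, while $\util\text{-OPT}=n$ and $\egal\text{-OPT}=1$, forcing both strong-price ratios to $+\infty$.

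Assuming $\lfloor\mu_i/n\rfloor\ge 1$ (equivalently $\mu_i\ge n$) for every $i$, the upper bound is a two-line application of Lemma~\ref{lem:kn}. Since $u_i(\mathbf o)\le\mu_i$ holds for every outcome and agent, $\util\text{-OPT}\le\sum_i\mu_i$ and $\egal\text{-OPT}\le\min_i\mu_i$. Every PROP outcome $\mathbf o$ satisfies $u_i(\mathbf o)\ge\lfloor\mu_i/n\rfloor\ge \mu_i/(2n-1)$, where the second inequality is Lemma~\ref{lem:kn} with $k=\mu_i$; summing over $i$ yields $\util(\mathbf o)\ge\sum_i\mu_i/(2n-1)$, and taking the minimum yields $\egal(\mathbf o)\ge\min_i\mu_i/(2n-1)$. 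Dividing the two pairs of bounds delivers $\text{\em s-PoPROP}_\util\le 2n-1$ and $\text{\em s-PoPROP}_\egal\le 2n-1$.

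For the matching lower bound I would use the following CP instance: $n$ agents, $\ell=2n-1$ timesteps, projects $\{p_0,p^*,p_1,\dots,p_n\}$, and $S_{it}=\{p^*,p_i\}$ at every timestep (so $p_0$ is approved by no one). Each agent has $\mu_i=2n-1$ and $\lfloor\mu_i/n\rfloor=1$. Selecting $p^*$ at every timestep attains $\util\text{-OPT}=n(2n-1)$ and $\egal\text{-OPT}=2n-1$. The outcome $(p_1,p_2,\dots,p_n,p_0,\dots,p_0)$ gives each agent utility exactly $1$, hence is PROP, with $\util=n$ and $\egal=1$; both ratios equal $2n-1$. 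The main obstacle is selecting the right witness: one needs both a ``common-good'' project $p^*$ to saturate the welfare optima and a null project $p_0$ to pad the remaining $n-1$ timesteps of the worst PROP outcome without handing any agent extra utility. Once the instance is identified, the calculations are routine.
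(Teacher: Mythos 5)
Your proof is correct and follows essentially the same route as the paper: the same kind of trivial one-timestep instance for unboundedness, the identical application of Lemma~\ref{lem:kn} for the upper bound, and an explicit instance with $\ell = 2n-1$ for the matching lower bound. The only (immaterial) difference is that your lower-bound instance uses $n+2$ projects, whereas the paper's uses just two: every agent approves $\{p\}$ at every timestep, and the bad PROP outcome selects $p$ once and the unapproved project $q$ in the remaining $2n-2$ timesteps.
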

\begin{proof}
    We will first argue that for general preferences the strong price of proportionality is unbounded, both for \util{} and for \egal{}. 
Consider an instance with $P = \{p,q\}$, $n\ge 2$ agents, and $\ell = 1$. 
For each $i \in N$, let $S_{i1} = \{p\}$.
Then, both possible outcomes $\mathbf{o} = (p)$ and 
$\mathbf{o}' = (q)$ satisfy PROP,
but only $\mathbf{o} = (p)$ delivers positive utility.
Hence, $\text{s-PoPROP}_\util=\text{s-PoPROP}_\egal=+\infty$.  

Next, suppose that
$\lfloor \frac{\mu_i}{n} \rfloor \geq 1$ 
for all agents $i \in N$. 
We will show that
$\text{s-PoPROP}_\util=\text{s-PoPROP}_\egal=2n-1$.

For the upper bound, consider a PROP outcome $\mathbf{o}$. 
For each agent $i \in N$ we have $u_i(\mathbf{o}) \geq \lfloor \frac{\mu_i}{n} \rfloor$.  Since $\lfloor \frac{\mu_i}{n} \rfloor \geq 1$ implies $\mu_i\ge n$, we can apply
Lemma~\ref{lem:kn} to $\mu_i$ and $n$ to obtain
$\lfloor \frac{\mu_i}{n} \rfloor\ge \frac{\mu_i}{2n-1}$, so  
\begin{align*}
\util{}({\mathbf o}) &=\sum_{i\in N}u_i(\mathbf{o})\ge \frac{1}{2n-1}\sum_{i\in N}\mu_i, \\
\egal{}({\mathbf o}) &=\min_{i\in N}u_i(\mathbf{o})\ge \frac{1}{2n-1}\min_{i\in N}\mu_i.
\end{align*}
Since the maximum utilitarian and egalitarian welfare do not exceed
     $\sum_{i\in N}\mu_i$ and $\min_{i\in N}\mu_i$, respectively, this establishes our upper bound.
     
To prove the lower bound, consider an instance with a set of $n$ agents $N$, $P = \{p,q\}$, and $\ell = 2n-1$. 
For each $i \in N$ and $t \in [\ell]$, let $S_{it} = \{p\}$. For this instance 
the maximum utilitarian welfare and 
the maximum egalitarian welfare are, respectively, $n \cdot (2n-1)$ and $2n-1$, 
obtained by selecting $p$ at every timestep. On the other hand, we have $1\le \frac{\mu_i}{n}<2$ for all $i\in N$, so ${\mathbf o}'$ with $o_1=p$, 
$o_t=q$ for $t=2, \dots, \ell$ satisfies PROP. This implies a lower bound of $2n - 1$ 
for both $\text{s-PoPROP}_\util$ and $\text{s-PoPROP}_\egal$.
\end{proof}


\section{Extensions}

So far, we have focused on \util{} and \egal{}; however, some of our results extend to the entire space of welfare measures between these two extremes, namely, the family of \emph{$p$-mean welfare} objectives. Formally, for each $p\in\mathbb R$, the {\em $p$-mean welfare} provided by an outcome $\mathbf o$ is given by $\left( \frac{1}{n} \sum_{i \in N} u_i(\mathbf{o})^p \right)^{1/p}$. 
The associated family of decision problems is defined as follows.
\begin{tcolorbox}
\textsc{$p$-mean Welfare}:\\
\textbf{Input}: An instance ${\mathcal I} = (N,P,\ell, (\mathbf{S}_i)_{i\in N})$ 
and a parameter $\lambda \in \mathbb{Z}^+$.\\
\textbf{Question}: Is there an outcome $\mathbf{o}$ that satisfies
$\left( \frac{1}{n} \sum_{i \in N} u_i(\mathbf{o})^p \right)^{1/p} \ge\lambda$?
\end{tcolorbox}

Note that setting $p = 1$ (respectively, $p= -\infty$) corresponds to the utilitarian (respectively, egalitarian) welfare. Setting $p \to 0$ corresponds to the geometric mean, or Nash welfare, which we denote by \nash{} (with the corresponding decision problem denoted by \nashdec{}). 

It can be verified that many of the computational hardness and impossibility results for \egal{} directly translate to similar results for \nash{}: 
\nash{} is obviously manipulable in the general setting and not strategyproof even in the CP setting. Moreover, \nashdec{} is NP-complete even when $m = 2$ and is W[2]-hard with respect to $\ell$. 

Regarding positive results, the XP algorithm with respect to $\ell$ is based on enumerating all possible outcomes, and  hence it works for all $p$-mean welfare measures. While our FPT algorithm (with respect to $n$) 
 for \egal{} relies on an ILP that does not extend to other welfare measures due to their non-linearity, a randomized XP algorithm has been recently proposed for a more demanding setting \cite{elkind2022temporalslot}. 
In our model, we can show that there is a deterministic XP algorithm (with respect to $n$) for any \emph{p-mean welfare} objective.
\begin{theorem}\label{thm:xp-det}
    There exists a deterministic \emph{XP} algorithm (with respect to $n$) that maximizes the $p$-mean welfare.
\end{theorem}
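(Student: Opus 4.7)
The plan is to adapt the preprocessing and typing machinery from the proof of Theorem~\ref{thm:fpt}, but replace the ILP-based solution (which exploited linearity of \egal{}) with an exhaustive enumeration over the bounded space of ``type-level'' assignments. Since the $p$-mean welfare depends only on the multiset of agent utilities, and each agent's utility in turn depends only on how many timesteps of each type are allocated to which project type, this enumeration is sufficient to find the optimum.

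First, I apply the preprocessing from the proof of Theorem~\ref{thm:fpt}: replace each project $p$ by $\ell$ fresh copies $p^1,\dots,p^\ell$ so that each project appears in the approval set of at most one timestep. Then define the project type $\pi(p) = \{i \in N : p \in \bigcup_{t \in [\ell]} S_{it}\}$ and the timestep type $\tau(t) = \{\pi(p) : p \in \bigcup_{i \in N} S_{it}\}$, and let $\mathcal{P}$ and $\mathcal{T}$ denote the sets of project and timestep types. As before, $|\mathcal{P}| \le 2^n$ and $|\mathcal{T}| \le 2^{2^n}$, both depending only on $n$. For each $\tau \in \mathcal{T}$, let $z_\tau$ be the number of timesteps of type $\tau$.

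Next, I introduce the same integer variables $x_{\tau,\pi}$ (for $\tau \in \mathcal{T}$, $\pi \in \tau$) as in Theorem~\ref{thm:fpt}, subject to $\sum_{\pi \in \tau} x_{\tau,\pi} \le z_\tau$ and $x_{\tau,\pi} \ge 0$, where $x_{\tau,\pi}$ records how many timesteps of type $\tau$ receive a project of type $\pi$. Any feasible integer assignment corresponds to a realizable outcome, and conversely every outcome induces such an assignment; moreover, two outcomes with the same assignment yield the same utility vector, since each agent $i$'s utility equals $\sum_{\tau \in \mathcal{T}} \sum_{\pi \in \mathcal{P}:\, i \in \pi} x_{\tau,\pi}$. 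Hence maximizing $p$-mean welfare reduces to maximizing the same quantity over all feasible integer assignments of the variables $\{x_{\tau,\pi}\}$.

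The algorithm then enumerates all such feasible assignments, computes the induced utility vector and the resulting $p$-mean welfare, and returns a best assignment together with a corresponding outcome (reconstructed by distributing projects to timesteps of matching types arbitrarily). Since each variable takes a value in $\{0, 1, \dots, \ell\}$ and the number of variables is at most $|\mathcal{T}| \cdot |\mathcal{P}| \le 2^{n + 2^n}$, the total number of assignments is at most $(\ell+1)^{2^{n+2^n}}$, which is of the form $|\mathcal{I}|^{f(n)}$. Thus the running time is polynomial in the input size for every fixed $n$, giving a deterministic XP algorithm.

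The main conceptual hurdle is noticing that, even though the $p$-mean objective is non-linear and therefore precludes a direct appeal to Lenstra's theorem, the non-linearity is irrelevant once one works at the level of type-counts: the utility of each agent is a linear function of the $x_{\tau,\pi}$, and the feasible set of these variables has size bounded by a function of $n$ times a polynomial in $\ell$. The only routine check is that the enumeration indeed covers every realizable utility vector, which follows from the preprocessing that makes distinct project copies reusable across timesteps of the appropriate type.
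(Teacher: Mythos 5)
Your proof is correct, but it takes a genuinely different route from the paper. The paper's proof is a dynamic program over \emph{partial utility vectors}: it maintains a table $Q[V,t]$ recording whether some outcome achieves utility vector $V\in\{0,\dots,\ell\}^n$ on the first $t$ timesteps, fills it in time $\mathcal{O}((\ell+1)^{n+1}mn)$, and then scans the realizable final vectors for the best $p$-mean value. You instead recycle the type-classification and preprocessing from Theorem~\ref{thm:fpt} and replace Lenstra's theorem by brute-force enumeration of the feasible integer assignments $\{x_{\tau,\pi}\}$. Your key observation---that the non-linearity of the $p$-mean objective only blocks the appeal to ILP, not the reduction to type-counts, since each $u_i$ remains linear in the $x_{\tau,\pi}$ and the objective depends only on the utility vector---is sound, and the count $(\ell+1)^{2^{n+2^n}} = |\mathcal{I}|^{f(n)}$ indeed places the problem in XP. What each approach buys: yours makes the connection to the FPT machinery explicit and would immediately yield FPT again for any objective where the type-level program can be solved faster than by enumeration; the paper's DP avoids the preprocessing entirely and gives a far better exponent ($n+1$ rather than $2^{n+2^n}$), which matters if one cares about the algorithm rather than just the complexity classification. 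One small point worth a sentence in a polished write-up: when an assignment leaves some timesteps of a type unallocated ($\sum_{\pi\in\tau}x_{\tau,\pi}<z_\tau$), the reconstructed outcome may give some agents strictly more utility than the linear formula predicts (if every project available at that timestep is approved by someone); this is harmless because $p$-mean welfare is monotone in each coordinate and the optimal outcome induces an assignment whose computed value equals its true welfare, but the claim that ``two outcomes with the same assignment yield the same utility vector'' should be stated as holding for the canonical reconstruction rather than for arbitrary outcomes.
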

\begin{proof}
    We describe a dynamic programming algorithm that runs in time $\mathcal{O}((\ell+1)^{n+1}\cdot mn)$.

    Fix an instance ${\mathcal I}=(N, P, \ell, ({\mathbf S}_i)_{i\in N})$, and let $L=\{0, \dots,\ell\}$.
    Given an outcome $\mathbf o$ for $\mathcal I$, a timestep $t\in L$, and an agent $i\in N$, we write $u_i^t({\mathbf o})$
    to denote the utility obtained by agent $i$ from the first $t$ timesteps under $\mathbf o$: 
    $u_i^t({\mathbf o})=|\{j\in [t]: o_j\in S_{ij}\}|$. We collect these utilities in a vector $U^t({\mathbf o})=(u_i^t({\mathbf o}))_{i\in N}$. Note that for each $t\in L$ and each ${\mathbf o}\in\Pi({\mathcal I})$ we have $U^t({\mathbf o})\in L^n$.

    For each vector $V\in L^n$ and each $t\in L$
    let $Q[V, t]=1$ if there exists an outcome ${\mathbf o}\in\Pi({\mathcal I})$ such that $U^t({\mathbf o})=V$, 
    and let $Q[V, t]=0$ otherwise. We will now explain how to 
    compute the quantities $(Q[V, t])_{V\in L^n, t\in L}$ using dynamic programming.
    
    For the base case $t=0$, we have
\begin{equation*}
    Q[V, 0] =
    \begin{cases}
        1 & \text{if  } V = (0, \dots, 0)\\
        0 & \text{otherwise.} \\
    \end{cases}
\end{equation*}

Now, suppose $t\ge 1$.
Then for each $V\in L^n$ 
we have $Q[V, t]=1$ if and only if there is a vector $Y\in L^n$
with $Q[Y, t-1]=1$ and a project $p$ such that for each 
$i\in N$ we have either (i) $p\in S_{it}$ and $Y_i+1=V_i$ or (ii) $p\not\in S_{it}$ and $Y_i=V_i$. 
Thus, given the values of $Q[Y, t-1]$ for all $Y\in L^n$,
we can compute the values $Q[V, t]$ for all $V\in L^n$;
each value can be determined by iterating through all projects and all agents, i.e., in time ${\mathcal O}(mn)$.

As our dynamic programming table has $(\ell+1)^{n+1}$
entries, it can be filled in time ${\mathcal O}((\ell+1)^{n+1}mn)$. Once it has been filled, we can scan through all vectors
$V\in L^n$ such that $Q[V, \ell]=1$ and find one with the highest $p$-mean welfare. An outcome $\mathbf o$ that provides this welfare can be found using standard dynamic programming techniques.
\end{proof}

However, it remains open whether 
there is an FPT algorithm with respect to $n$ that maximizes \nash{} (or other $p$-mean welfare objectives with $p\neq 0, -\infty$).

\section{Conclusions}

We investigated the problem of maximizing utilitarian and egalitarian welfare for temporal elections. 
We showed that, while \util{} outcomes can be computed in polynomial time and can be achieved in a strategyproof manner, \egal{} is NP-complete and obviously manipulable. 
To circumvent the NP-hardness of \egal{}, we analyzed its parameterized complexity with respect to $n,m$ and $\ell$, and provided an approximation algorithm that is based on randomized rounding. 
We also established the existence of a NOM mechanism for \egal{} under a mild constraint on agents' preferences. 
Finally, we considered proportionality and showed that it is computationally hard to select the `best' proportional outcome. We also gave upper and lower bounds on the (strong) price of proportionality with respect to both \util{} and \egal{}. 

We note that our upper bound on the price of proportionality with respect to \util{} only applies to the CP setting. It is easy to see that $\text{PoPROP}_\util$ does not exceed $n$ even for general preferences: if there are $\ell'$ timesteps $t\in [\ell]$ with $\cup_{i\in N}S_{it}\neq\varnothing$, the utilitarian welfare does not exceed $\ell' n$, whereas selecting $o_t$ from $\cup_{i\in N}S_{it}$ for each $t\in [\ell]$ results in an outcome $\mathbf o$ that satisfies $\util({\mathbf o})\ge \ell'$.
However, obtaining non-trivial upper bounds on $\text{PoPROP}_\util$ in the general setting remains an open problem.

Some of our results extend to $p$-mean objectives other than \egal{} and \util{}. However, it remains open which of these objectives admit polynomial-time algorithms.

In our model, agents are assumed to have approval preferences, which can be thought of as  \emph{binary utilities} over projects. One can extend this model to arbitrary {\em cardinal preferences}, 
by allowing each agent $i \in N$ to have a valuation function $v_i:P\times[\ell] \rightarrow \mathbb{R}_{[0,1]}$ instead of an approval set.
Cardinal preferences have recently been studied in various social choice settings \cite{conitzer2017fairpublic,elkind2022portioning,elkind2024temporalfairdivision,fain2018publicgoods,freeman2017dynamicsocialchoice}. 
It would be interesting to investigate whether the positive results for our model extend to the model that allows for arbitrary cardinal preferences.
\begin{ack}
    Edith Elkind was supported by the AI Programme of The Alan Turing Institute and an EPSRC Grant EP/X038548/1.
\end{ack}

\bibliography{abb,mybibfile}

\newpage

\appendix

\begin{center}
\Large
\textbf{Appendix}
\end{center}

\vspace{2mm}
    
\section{Proof of Theorem \ref{thm:prop}}\label{sec:thmprop}
We describe a polynomial-time greedy algorithm for obtaining a PROP outcome.

We start by reordering the agents so that $\mu_1\le \mu_2\le\dots\le \mu_n$,
and let $T=[\ell]$. We then process the agents in $N$ one by one. For each $i\in N$
we pick a subset of timesteps $T_i\subseteq T$
with $|T_i| = \lfloor\frac{\mu_i}{n}\rfloor$ and $S_{it}\neq\varnothing$ for each $t\in T_i$. Then for each $t\in T_i$ we set $o_t=\{p\}$ for some $p\in S_{it}$; we then remove
$T_i$ from $T$. If $T$ is still non-empty after all agents have been processed, 
we select the projects for timesteps in $T$ arbitrarily; e.g., for each $t\in T$ we can select a project that receives the maximum number of approvals at $t$.

Clearly, if we manage to find an appropriate set of timesteps $T_i$ for each $i\in N$, 
we obtain a proportional outcome. Thus, it remains to argue that such a set of timesteps
can always be found.
For an agent $i$, the number of projects selected before her turn is 
\begin{align*}
    \sum_{j \in [i-1]} \Bigl\lfloor \frac{\mu_j}{n} \Bigr\rfloor 
    &\leq \sum_{j \in [i-1]} \Bigl\lfloor \frac{\mu_i}{n} \Bigr\rfloor 
    = (i -1)\Bigl\lfloor \frac{\mu_i}{n} \Bigr\rfloor\\ 
    &\leq (n -1)\Bigl\lfloor \frac{\mu_i}{n} \Bigr\rfloor 
    \leq \frac{(n -1) \mu_i}{n}.
\end{align*}
As $\mu_i - \frac{(n -1) \mu_i}{n} = \frac{\mu_i}{n} \geq \lfloor \frac{\mu_i}{n} \rfloor$,
it is indeed the case that we can find a set $T_i$ of appropriate size.
\qed

\section{Proportionality and Strong PJR}\label{sec:spjr}
The notion of proportional justified representation (PJR) was proposed by \citet{sanchez2017proportional} 
for multiwinner voting with approval preferences; it was extended to the temporal setting by \citet{bulteau2021jrperpetual} 
and \citet{chandak23}. In what follows, we use the terminology of \citet{chandak23}.

\begin{definition}\label{def:spjr}
Given an instance ${\mathcal I}=(N, P, \ell, ({\mathbf S}_i)_{i\in N})$, 
we say that a group of voters $N'$ {\em agrees in $k$ timesteps} if there exists
a subset $T'\subseteq [\ell]$ with $|T'|=k$ such that $\cap_{i\in N'}S_{it}\neq\varnothing$
for each $t\in T'$. We say that an outcome
$\mathbf o=(o_1, \dots, o_\ell)$ provides {\em strong proportional justified representation} for 
$\mathcal I$ if for every $s, k\in\mathbb N$ and every group of voters $N^*$ with 
$|N^*|\ge s\cdot\frac{n}{k}$ that agrees in $k$ timesteps there exists 
a subset of timesteps $T^*\subseteq [\ell]$ with $|T^*|=s$ such that for each $t\in T^*$
it holds that $o_t\in \cup_{i\in N^*} S_{it}$. 
\end{definition}
Now, we specialize this definition to a single voter $i$. Note that the set $\{i\}$
agrees in $\mu_i$ timesteps and $|\{i\}|=1 = \frac{\mu_i}{n}\cdot \frac{n}{\mu_i}$.
Hence, by setting $k=\mu_i$, $s=\lfloor\frac{\mu_i}{n}\rfloor$ we conclude that 
if an outcome $\mathbf o$ provides strong proportional justified representation for $\mathcal I$, 
it has to be the case that $u_i({\mathbf o})\ge \lfloor\frac{\mu_i}{n}\rfloor$, i.e., 
$\mathbf o$ is a PROP outcome.
\end{document}